\newcommand*{\transpose}{%
  {\mathpalette\@transpose{}}%
}
\newif\if@borderstar
\def\bordermatrix{\@ifnextchar*{%
\@borderstartrue\@bordermatrix@i}{\@borderstarfalse\@bordermatrix@i*}%
}
\def\@bordermatrix@i*{\@ifnextchar[{\@bordermatrix@ii}{\@bordermatrix@ii[()]}}
\def\@bordermatrix@ii[#1]#2{%
\begingroup
\m@th\@tempdima8.75\p@\setbox\z@\vbox{%
\def\cr{\crcr\noalign{\kern 2\p@\global\let\cr\endline }}%
\ialign {$##$\hfil\kern 2\p@\kern\@tempdima & \thinspace %
\hfil $##$\hfil && \quad\hfil $##$\hfil\crcr\omit\strut %
\hfil\crcr\noalign{\kern -\baselineskip}#2\crcr\omit %
\strut\cr}}%
\setbox\tw@\vbox{\unvcopy\z@\global\setbox\@ne\lastbox}%
\setbox\tw@\hbox{\unhbox\@ne\unskip\global\setbox\@ne\lastbox}%
\setbox\tw@\hbox{%
$\kern\wd\@ne\kern -\@tempdima\left\@firstoftwo#1%
\if@borderstar\kern2pt\else\kern -\wd\@ne\fi%
\global\setbox\@ne\vbox{\box\@ne\if@borderstar\else\kern 2\p@\fi}%
\vcenter{\if@borderstar\else\kern -\ht\@ne\fi%
\unvbox\z@\kern-\if@borderstar2\fi\baselineskip}%
\if@borderstar\kern-2\@tempdima\kern2\p@\else\,\fi\right\@secondoftwo#1 $%
}\null \;\vbox{\kern\ht\@ne\box\tw@}%
\endgroup
}
\newtheorem{Corollary}{Corollary}
\newtheorem{theorem}{Theorem}
\newtheorem{remark}{Remark}
\newtheorem{proposition}{Proposition}
\newtheorem{definition}{Definition}
\newtheorem{conjecture}{Conjecture}
\newcommand{\removelatexerror}{\let\@latex@error\@gobble}
\newcommand{\proofpart}[2]{%
	\par
	\addvspace{\medskipamount}%
	\noindent\emph{Part #1: #2}\par\nobreak
	\addvspace{\smallskipamount}%
	\@afterheading
}
\renewenvironment{proof}[1][\proofname]{\par
  \pushQED{\qed}%
  \normalfont \topsep6\p@\@plus6\p@\relax
  \trivlist
  \item[\hskip\labelsep
        \itshape
    #1\@addpunct{:}]\ignorespaces
}{%
  \popQED\endtrivlist\@endpefalse
}
\renewcommand{\mathsf}[1]{#1}
\theoremstyle{definition}
\newtheorem{example}{Example}
\begin{document}

\title{\LARGE \textbf{\Large A Combinatorial View of the Service Rates of Codes Problem, \\[-0.75ex]
its Equivalence to Fractional Matching and its Connection with Batch Codes}}

\author{\normalsize {$^\ast$}Fatemeh Kazemi, {$^\ast$}Esmaeil Karimi, {$^\dagger$}Emina Soljanin, and {$^\ast$}Alex Sprintson\\
{\small {$^\ast$}ECE Dept., Texas A\&M University, USA \{fatemeh.kazemi, esmaeil.karimi, spalex\}@tamu.edu}\\ {\small {$^\dagger$}ECE Dept., Rutgers University, USA \{emina.soljanin\}@rutgers.edu}}

\maketitle

\thispagestyle{plain}

\begin{abstract} 
We propose a novel technique for constructing a graph representation of a code through which we establish a significant connection between the service rate problem and the well-known fractional matching problem. Using this connection, we show that the service capacity of a coded storage system equals the fractional matching number in the graph representation of the code, and thus is lower bounded and upper bounded by the matching number and the vertex cover number, respectively. This is of great interest because if the graph representation of a code is bipartite, then the derived upper and lower bounds are equal, and we obtain the capacity. Leveraging this result, we characterize the service capacity of the binary simplex code whose graph representation, as we show, is bipartite. Moreover, we show that the service rate problem can be viewed as a generalization of the multiset primitive batch codes problem.

\end{abstract}

\begin{IEEEkeywords}
Service rates of codes, graph representation of a code, fractional matching, batch codes
\end{IEEEkeywords}

\section{Introduction}

Providing reliability against failures, ensuring availability of stored content during high demand, providing fast content download and serving a large number of users simultaneously have always been major concerns in cloud storage systems. The service capacity has been very recently recognized as an important performance metric. It has a wide relevance, and can be interpreted as a measure of the maximum number of users that can be simultaneously served by a coded storage system~\cite{noori2016storage,aktacs2017service,anderson2018service,kazemi2020geometric,peng2018distributed,aktas2019load,allocation:sardariRFS10}. Thus, maximizing the service capacity is of great significance for the emerging applications such as distributed learning and fog computing. Moreover, maximizing the service capacity reduces the users' experienced latency, particularly in a high traffic regime, which is important for the delay-sensitive applications such as live streaming, where many users wish to get the same content at the same time. 

The service rate problem is concerned with a distributed storage system in which $k$ files $f_1,\dots,f_k$ are stored across $n$ servers using a linear ${[n,k]_q}$ code such that the requests to download file $f_i$ arrive at rate $\lambda_i$, and the server $l$ operates at rate $\mu_l$. A goal of the service rate problem is to determine the service rate region of this coded storage system which is the set of all request arrival rates $\boldsymbol{\lambda}=(\lambda_1,\dots,\lambda_k)$ that can be served by this system given the finite service rate of the servers. The service rate problem is generally formulated as a sequence of linear programs, that has been studied only in some limited cases\cite{aktacs2017service,anderson2018service,kazemi2020geometric}. In this paper, we show that the service rate problem is equivalent to the fractional matching problem which were extensively studied in the context of graph theory. This equivalence result allows one to leverage the techniques in the rich literature of the graph theory for solving the service rate problem.

\subsection{Previous and Related Work}

Existing studies on data access pursue various directions. Many are focused on providing efficient maintenance of storage under possible failures of a subset of nodes accessed (see e.g.,~\cite{dimakis2010network,dimakis2011survey,huang2013pyramid,gopalan2012locality,allocation:sardariRFS10}). These studies typically assume infinite service rate (instantaneous service) for each storage node. Hence, they do not address the problem of serving a large number of users simultaneously. 

Another important line of work is concerned with caching (see e.g.,~\cite{shanmugam2013femtocaching,maddah2016coding,hamidouche2014many}), in which generally the limited capacity of the backhaul link is considered as the main bottleneck of the system, and the goal is usually to minimize the backhaul traffic by prefetching the popular contents at the storage nodes of limited size. Thus, these works do not address the scenarios where many users want to get the same content concurrently given the limited capacity of the access part of the network. 

The other related body of work is concerned with minimizing the download latency (see e.g.,~\cite{joshi2012coding,shah2014mds,joshi2014delay,liang2014fast,gardner2015reducing,kadhe2015analyzing,kadhe2015availability,ISIT:AktasS18,aktas2017simplex,aktas2019analyzing,latency:JoshiSW15,latency:JoshiSW17}). These papers assume that the storage nodes can serve the customers at some finite rate, and aim to compute the download latency for intractable queueing systems that appear in coded storage. 

We note now and explain in detail later that because of the constraints on the service rate of servers, by maximizing the service capacity the load balancing is provided in the distributed storage system (see~\cite{aktas2019load}). In that sense, the most relevant work to this paper includes batch codes, switch codes and PIR codes (see e.g.,~\cite{ishai2004batch,skachek2018batch,lipmaa2015linear,fazeli2015pir,wang2017switch}). However, the problems considered in these papers, as we will show later, can be often seen as special cases of the service rate problem. 

A connection between distributed storage allocation problems (see~\cite{allocation:sardariRFS10,allocation:LeongDH12} and references therein) and matching problems in hyper-graphs have been observed in computer science literature~\cite{Matching:AlonFH12} (see also~\cite{Matching:KaoDLH13}). In particular, it was noted that the uniform model of distributed storage allocation considered in~\cite{allocation:sardariRFS10} leads to a question which is asymptotically equivalent to the fractional version of a long standing conjecture by Erd\H{o}s~\cite{erdHos1965problem} on the maximum number of edges in a uniform hypergraph.

\subsection{Main Contributions}

We first construct a special graph representation of a linear code in Sec.~\ref{graph-rep}. We then show the following results in Sec.~\ref{Equivalence-Results}:~1) equivalence between the service rate problem and the well-known fractional matching problem and 2) equivalence between the integral service rate problem and the matching problem. These equivalence results allow us to show that the service capacity of a code is equal to the fractional matching number in the graph representation of a code, and thus is lower bounded and upper bounded by the matching number and the vertex cover number, respectively. This is beneficial because if the graph representation of a code is a bipartite graph, then the upper bound and lower bound are equal, which allows us to establish the service capacity of the storage system. Leveraging this result, we determine the service capacity of the binary simplex codes whose graph representation, as we will show, is bipartite. Furthermore, we show that the service rate problem can be viewed as a generalization of batch codes problem in Sec.~\ref{generalization-batch}. 
In particular, we show that the multiset primitive batch codes problem is a special case of the service rate problem when the solution (the portion of requests assigned to the recovery sets) is restricted to be integral. \textit{All the proofs can be found in the Appendix.}

\section{Coded System and its Service Rate Region}\label{sec:Problem Formulation}

Throughout this work, we use bold-face lower-case letters for vectors and bold-face capital letters for matrices. Let $\mathbb{N}$ denote the set of positive integers. $\mathbb{F}_q$ denotes the finite field with $q$ elements. For $i \in \mathbb{N}$,  $[i]\triangleq\{1,\dots,i\}$. For $n \in \mathbb{N}$, $\mathbf{1}_n$ denotes the all-one vector of length $n$.

Consider a storage system where $k$ files $f_1,\dots,f_k$ are stored across $n$ servers labeled $1,\dots,n$, using an $[n,k]_q$ code with generator matrix $\mathbf{G} \in \mathbb{F}^{k\times n}_q$. A set of stored symbols that can be used to recover file $f_i$ is referred to as a recovery set for file $f_i$. Let $\mathbf{g}_j$ be the $j$th column of $\mathbf{G}$. The set $R\subseteq [n]$ is a recovery set for file $f_i$ if there exists non-zero $\alpha_j$'s $\in \mathbb{F}_q$ such that $\sum_{j \in R}\alpha_j \mathbf{g}_j=\mathbf{e}_i$, where $\mathbf{e}_i$ denotes the $i$th unit vector. In other words, a set $R$ is a recovery set for file $f_i$ if the unit vector $\mathbf{e}_i$ can be recovered by a linear combination of the columns of $\mathbf{G}$ indexed by the set $R$. 

Let ${t_i \in \mathbb{N}}$ denote the number of recovery sets for file $f_i$, and $\mathcal{R}_{i}=\{R_{i,1},\dots,R_{i,t_i}\}$ denote the set of recovery sets for file $f_i$. We assume w.l.o.g. that the time to download a file from server ${l\in [n]}$ is exponential with rate ${\mu_l \in \mathbb{R}_{\geq 0}}$, i.e., $\mu_l$ is the average rate at which server $l$ resolves the received file requests. We denote the service rates of servers $1,\dots,n$ by the vector ${\boldsymbol{\mu}=(\mu_1,\dots,\mu_n)}$. We further assume that the arrival of requests for file $f_i$ is Poisson with rate $\lambda_i$, $i\in [k]$. We denote the request rates for files $1,\dots,k$ by the vector ${\boldsymbol{\lambda}=(\lambda_1,\dots,\lambda_k)}$. We consider the class of scheduling strategies that assign a fraction of requests for a file to each of its recovery sets. Let $\lambda_{i,j}$ be the portion of requests for file $f_i$ that are assigned to the recovery set ${R_{i,j}}$, ${j\in [t_i]}$. 

The service rate problem seeks to determine the set of arrival rates $\boldsymbol{\lambda}=(\lambda_1,\dots,\lambda_k)$ that can be served by a coded storage system with generator matrix $\mathbf{G}$ and service rate $\boldsymbol{\mu}$, referred to as service rate region $\mathcal{S}(\mathbf{G},\boldsymbol{\mu}) \subseteq \mathbb{R}^k_{\geq 0}$.

\begin{definition}
An \ul{$(\mathbf{G},\boldsymbol{\mu})$ system} is a coded storage system in which $k$ files are stored across $n$ servers using a linear $[n,k]_q$ code with generator matrix $\mathbf{G} \in \mathbb{F}^{k\times n}_q$ such that file $f_i$ for $i \in [k]$ has $t_i \in \mathbb{N}$ recovery sets denoted by ${\mathcal{R}_{i}=\{R_{i,1},\dots,R_{i,t_i}\}}$, and the service rate of servers in the system is $\boldsymbol{\mu}=(\mu_1,\dots,\mu_n)$. 
\end{definition}

\begin{definition}\label{def:SerRateReg}
The \ul{service rate region} of an $(\mathbf{G},\boldsymbol{\mu})$ system, denoted by $\mathcal{S}(\mathbf{G},\boldsymbol{\mu})$, is the set of vectors ${\boldsymbol{\lambda}=(\lambda_1,\dots,\lambda_k)}$ for which there exist $\lambda_{i,j}$ satisfying the following constraints:
\begin{subequations}\label{eq:1}
\begin{align}\label{eq:1a}
&\sum_{j=1}^{t_i} \lambda_{i,j}=\lambda_i, ~~~~~~~~~~~ \text{for all} ~~~ i\in [k] \\
&\sum_{i=1}^{k}~\sum_{\substack{{j\in [t_i]} \\ \label{eq:1b} {l \in R_{i,j}}}} \lambda_{i,j} \leq \mu_l, ~~~ \text{for all} ~~~ l\in [n]\\
&\lambda_{i,j} \in \mathbb{R}_{\geq 0},~~~~~~~~~~~~~ \text{for all} ~~~ i\in [k],~j\in [t_i]\label{eq:pos}
\end{align}
\end{subequations}
\end{definition}

Note that constraints~\eqref{eq:1a} ensure that the demands for all files are served, and constraints~\eqref{eq:1b} guarantee that no node is sent requests in excess of its service rate. 

\begin{proposition}\cite[Lemma 1]{kazemi2020geometric}\label{prop:convexity}
The service rate region of an $(\mathbf{G},\boldsymbol{\mu})$ system $\mathcal{S}(\mathbf{G},\boldsymbol{\mu})$ is a non-empty, convex, closed, and bounded subset of the ${\mathbb{R}^k_{\geq 0}}$.
\end{proposition}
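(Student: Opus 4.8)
The plan is to realize $\mathcal{S}(\mathbf{G},\boldsymbol{\mu})$ as the image of a polytope under a linear projection and then read off all four properties. Concretely, set $m \triangleq \sum_{i=1}^{k} t_i$ and let $P \subseteq \mathbb{R}^{k} \times \mathbb{R}^{m}$ be the set of all pairs $(\boldsymbol{\lambda},(\lambda_{i,j}))$ satisfying the linear equalities~\eqref{eq:1a}, the non-strict linear inequalities~\eqref{eq:1b}, and the nonnegativity constraints~\eqref{eq:pos}. By construction $P$ is a polyhedron, and $\mathcal{S}(\mathbf{G},\boldsymbol{\mu}) = \pi(P)$, where $\pi \colon \mathbb{R}^{k} \times \mathbb{R}^{m} \to \mathbb{R}^{k}$ is the continuous linear projection onto the first $k$ coordinates. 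Non-emptiness and the containment $\mathcal{S}(\mathbf{G},\boldsymbol{\mu}) \subseteq \mathbb{R}^{k}_{\geq 0}$ are then immediate: taking $\lambda_{i,j}=0$ for all $i,j$ gives $\boldsymbol{\lambda}=\mathbf{0}\in\mathcal{S}(\mathbf{G},\boldsymbol{\mu})$, and any $\boldsymbol{\lambda}$ in the region satisfies $\lambda_i = \sum_{j}\lambda_{i,j}\geq 0$ by~\eqref{eq:1a} and~\eqref{eq:pos}.

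For boundedness I would first observe that every recovery set $R_{i,j}$ is non-empty, since $\sum_{l\in R_{i,j}}\alpha_l \mathbf{g}_l = \mathbf{e}_i \neq \mathbf{0}$ forces $R_{i,j}\neq\emptyset$; hence one may pick some $l^{\ast}\in R_{i,j}$, and~\eqref{eq:1b} together with~\eqref{eq:pos} yields $\lambda_{i,j}\leq \mu_{l^{\ast}} \leq \mu_{\max}$ with $\mu_{\max}\triangleq \max_{l\in[n]}\mu_l$, so that $\lambda_i = \sum_{j=1}^{t_i}\lambda_{i,j} \leq t_i\,\mu_{\max}$. Thus $P$ is a bounded polyhedron, i.e. a polytope, and in particular compact (it is closed, being the solution set of finitely many linear equalities and non-strict linear inequalities). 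Since $\pi$ is continuous, $\mathcal{S}(\mathbf{G},\boldsymbol{\mu})=\pi(P)$ is compact, hence both closed and bounded.

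This closedness step is the only one I expect to require genuine care: $\mathcal{S}(\mathbf{G},\boldsymbol{\mu})$ is defined through an existential quantifier over the auxiliary variables $\lambda_{i,j}$, and projections of closed sets need not be closed in general. What rescues the argument is the boundedness of $P$ established above (equivalently, one could invoke Fourier--Motzkin elimination, which keeps the projection of a polyhedron a polyhedron and therefore closed); either route pins down the one potentially delicate point.

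Finally, convexity follows because $P$, being a polyhedron, is convex and the linear image of a convex set is convex; alternatively, given $\boldsymbol{\lambda}^{(1)},\boldsymbol{\lambda}^{(2)}\in\mathcal{S}(\mathbf{G},\boldsymbol{\mu})$ with respective witnesses $(\lambda_{i,j}^{(1)})$ and $(\lambda_{i,j}^{(2)})$ and any $\theta\in[0,1]$, the point $\theta\boldsymbol{\lambda}^{(1)}+(1-\theta)\boldsymbol{\lambda}^{(2)}$ is witnessed by $\theta\lambda_{i,j}^{(1)}+(1-\theta)\lambda_{i,j}^{(2)}$, and each of~\eqref{eq:1a},~\eqref{eq:1b},~\eqref{eq:pos} is preserved under this convex combination. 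Assembling the four parts yields the claim.
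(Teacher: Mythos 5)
The paper itself offers no proof of this proposition---it is imported verbatim from \cite[Lemma 1]{kazemi2020geometric}---so there is no in-paper argument to compare against; judged on its own, your proof is correct and complete. Realizing $\mathcal{S}(\mathbf{G},\boldsymbol{\mu})$ as the projection $\pi(P)$ of the lifted polyhedron in the $(\boldsymbol{\lambda},(\lambda_{i,j}))$-space gives non-emptiness, the containment in $\mathbb{R}^k_{\geq 0}$, and convexity immediately, and you correctly isolate the only delicate point, closedness of a projection: your boundedness argument (every recovery set is non-empty, hence each $\lambda_{i,j}\leq\mu_{\max}$ by \eqref{eq:1b} and \eqref{eq:pos}, so $P$ is a polytope and $\pi(P)$ is compact) settles it, and the Fourier--Motzkin alternative you mention would do so as well even without boundedness. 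No gaps.
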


The service capacity of an $(\mathbf{G},\boldsymbol{\mu})$ system, $\lambda^\star(\mathbf{G},\boldsymbol{\mu})$, is defined as the maximum sum of arrival rates that can be served simultaneously by the storage system. We define a maximum demand vector, denoted by  ${\boldsymbol{{\lambda}}^\star=(\lambda_{1}^\star,\dots,\lambda_{k}^\star)}$, as a vector in the service rate region for which ${\sum_{i=1}^{k}\lambda_{i}^\star=\lambda^\star(\mathbf{G},\boldsymbol{\mu})}$. An instance of the maximum demand vector is obtained by solving the following linear programming (LP):
\begin{equation}\label{maxrate}
\max~~{\sum_{i=1}^{k} \lambda_i}
\qquad\text{s.t.}~~\eqref{eq:1}~~\text{holds}. 
\end{equation}

\begin{definition} The \ul{integral service rate region} of an $(\mathbf{G},\boldsymbol{\mu})$ system, denoted by $\mathcal{S}_{I}(\mathbf{G},\boldsymbol{\mu})$, is the set of all vectors ${\boldsymbol{\lambda}=(\lambda_1,\dots,\lambda_k)}$ for which there exist $\lambda_{i,j} \in \mathbb{Z}_{\geq 0}$ satisfying the sets of constraints $\eqref{eq:1a},\eqref{eq:1b}$.
\end{definition}

Note that each demand vector $\boldsymbol{\lambda}=(\lambda_1,\dots,\lambda_k)$ in the integral service rate region has integral coordinates, i.e., $\mathcal{S}_{I}(\mathbf{G},\boldsymbol{\mu}) \subseteq \mathbb{Z}^k_{\geq 0}$. However, because of the fractional relaxation of $\lambda_{i,j}$, it is not guaranteed that the vectors with integral coordinates in the service rate region $\mathcal{S}(\mathbf{G},\boldsymbol{\mu})$ are also in the integral service rate region $\mathcal{S}_{I}(\mathbf{G},\boldsymbol{\mu})$.

\begin{remark}
In the integral setting of the service rate problem where $\lambda_{i,j}$ are non-negative integers, if each server can serve up to one request at a time, i.e., $\mu_l=1$ for all servers $l \in [n]$, then one can easily conclude that $\lambda_{i,j}$ are binary and the recovery sets used for each file $f_i$, $i \in [k]$ are disjoint. 
\end{remark}

\section{Equivalence to Fractional Matching}

We first introduce a graph representation of a code which is useful for characterizing the service capacity of a coded storage system through relating this problem with the well-known problem of finding the maximum fractional matching in a graph. In particular, we show that the service capacity of a code equals the fractional matching number in our graph representation of the code. Another way of determining the service capacity of a coded storage system is providing tight bounds on the maximum sum of the arrival rates that can be served by the storage system. We show that the matching number and the vertex cover number in the graph representation of a code, respectively are a lower bound and an upper bound on the service capacity of a code. Thus, if the graph representation of a code is a bipartite graph, according to the Duality Theorem~\cite{scheinerman2011fractional}, the matching number and vertex cover number are identical, and we are able to determine the capacity. As an application of this result, we determine the service capacity of the binary simplex codes whose graph representation, as we will show, is a bipartite graph. We next describe how to construct the graph representation of a code, and then we present the interesting connections.

\subsection{Graph Representation of Codes}\label{graph-rep}

We focus on the settings with recovery sets of size $1$ and $2$ where the recovery sets for each file is either a systematic symbol or a group of two symbols. Extensions to the general case are mostly straightforward and involve hypergraphs in which each edge can be incident to an arbitrary number of vertices. The graph representation of a code with generator matrix $\mathbf{G}$ is denoted by $G(V,E)$ where the vertices in $V$ correspond to the $n$ encoded symbols (the servers of the storage system), and the edges in $E$ correspond to the recovery sets of files. In $G(V,E)$, each self-loop represents a recovery set of size $1$ for the vertex (file) that it is connected to, and each edge between two vertices represents a recovery set of size $2$ for the file that can be recovered from these two vertices. Each edge is assigned a color such that the edges that correspond to the recovery sets of the same file are assigned the same color.
In that sense, we have an edge-colored graph. It should be noted that a graph with self-loops can be simply converted to a graph without any self-loops by adding sufficient number of dummy vertices (servers). We assume that the label of all dummy servers is zero and thus we denote a systematic recovery set for file $f_i$ by $\{0,r\}$ where $r$ is the label of the systematic server storing file $f_i$. Section~\ref{example} provides an example that shows the graph representation of $[7,3]_2$ simplex code.

\subsection{Matching and Vertex Cover Problems~\cite{scheinerman2011fractional}}

\begin{definition}
A \ul{matching} in a graph is a set of all pairwise non-adjacent edges.
\end{definition}
Alternatively, a matching in a graph $G(V,E)$ is an assignment of the values $\tilde{x}_e \in \{0,1\}$ to the edges $e \in E$ in such a way that for each vertex $v \in V$, the sum of the values on the incident edges is at most $1$. All the edges $e \in E$ with value $\tilde{x}_e=1$ are in the matching. Thus, a \textit{matching vector} in a graph $G(V,E)$ can be defined as a vector ${\tilde{\boldsymbol{x}}=(\tilde{x}_e:e \in E)}$ satisfying the following conditions:
\begin{subequations}\label{match}
\begin{align}
&\sum_{\text{e incident to $v$}} \tilde{x}_e\leq 1, ~~~~~~~~ \text{for all} ~~ v \in V\\
&~~~~~\tilde{x}_e \in \{0,1\}, ~~~~~~~~~~~\text{for all} ~~ e \in E
\end{align}
\end{subequations}

\begin{definition}
A \ul{maximum matching} in a graph is a matching that contains the largest number of edges. The maximum matching vector is denoted by $\tilde{\boldsymbol{x}}^\star$.
\end{definition}
The size of a maximum matching in a graph $G(V,E)$ is called matching number, denoted by $m(G)$. There may be several instances of maximum matchings in a graph. The problem of finding an instance of maximum matching can be formulated as the following integer LP:
\begin{equation}\label{maxmatch}
\max~~{\sum_{e \in E}\tilde{x}_e}
\qquad\text{s.t.}~~\eqref{match}~~\text{holds}.
\end{equation}

\begin{definition}
A \ul{fractional matching} in a graph $G(V,E)$ is an assignment of the values ${x_e \in [0,1]}$ to the edges ${e \in E}$ such that for each vertex $v \in V$, the sum of the values on the incident edges is at most $1$.
\end{definition}
A \textit{fractional matching vector} in a graph ${G(V,E)}$ can be defined as a vector ${\boldsymbol{x}=(x_e:e \in E)}$ satisfying the following constraints:\begin{subequations}\label{FracMatch}
\begin{align}
&\sum_{\text{e incident to $v$}} x_e\leq 1, ~~~~~~~ \text{for all} ~~ v \in V\\
&~~~~~x_e \in [0,1], ~~~~~~~~~~~\text{for all} ~~ e \in E
\end{align}
\end{subequations}

\begin{definition}
A \ul{maximum fractional matching}, denoted by $\boldsymbol{x}^\star$, is a fractional matching vector in the graph that has the maximum value $\sum_{e \in E}x_e$ over all fractional matching vectors in the graph.
\end{definition}
The value of a maximum fractional matching in a graph $G(V,E)$ is called the fractional matching number, denoted as $m_f(G)$. Finding an instance of maximum fractional matching in a graph can be formulated as the following LP:\begin{equation}\label{MaxFracMatch}
\max~~{\sum_{e \in E}x_e}\\
\qquad\text{s.t.}~~\eqref{FracMatch}~~\text{holds}.
\end{equation}

\begin{definition}
A \ul{vertex cover} of a graph is a set of vertices such that each edge of the graph is incident to at least one vertex in the set.
\end{definition}
Alternatively, a vertex cover of a graph $G(V,E)$ is an assignment of the values $y_v \in \{0,1\}$ to the vertices $v \in V$ in such a way that for each edge $e \in E$, the sum of the values on the endpoint vertices is at least $1$. All the vertices $v \in V$ with value $\tilde{y}_v=1$ are in the vertex cover. Thus, a \textit{vertex cover vector} of a graph $G(V,E)$ can be defined as a vector $\boldsymbol{y}=(y_v:v \in V)$ satisfying the following conditions:\begin{subequations}\label{vertexcov}
\begin{align}
&\sum_{\text{$v$ incident to $e$ }} y_v\geq 1, ~~~~~~~~ \text{for all} ~~ e \in E\\
&~~~~~y_v \in \{0,1\}, ~~~~~~~~~~~\text{for all} ~~ v \in V
\end{align}
\end{subequations}

\begin{definition}
A \ul{minimum vertex cover} in a graph is a vertex cover with minimum number of vertices.
\end{definition}
The cardinality of a minimum vertex cover in a graph $G(V,E)$ is called vertex cover number, denoted by $v(G)$. There may be several instances of a minimum vertex cover in a graph. Finding an instance of minimum vertex cover in a graph can be formulated as the following integer LP:
\begin{equation}
\label{minvertexcov}
\min~~{\sum_{v \in V}y_v}\qquad
\qquad\text{s.t.}~~\eqref{vertexcov}~~\text{holds}.
\end{equation}

\begin{proposition}\label{prop:duality}
For an arbitrary graph $G$, it is known that $m(G)\leq m_f(G)\leq v(G)$. For a bipartite graph $G$, it holds that $m(G)= m_f(G)= v(G)$.
\end{proposition}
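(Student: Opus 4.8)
The plan is to establish the inequality chain $m(G)\le m_f(G)\le v(G)$ for an arbitrary graph $G$, and then specialize to the bipartite case by invoking K\"onig's theorem.

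First, for $m(G)\le m_f(G)$, I would simply note that every matching vector $\tilde{\boldsymbol{x}}$ satisfying \eqref{match} is also a fractional matching vector satisfying \eqref{FracMatch}, since $\{0,1\}\subseteq[0,1]$ and the per-vertex constraints are identical in the two systems. Hence the feasible region of the integer LP \eqref{maxmatch} is contained in the feasible region of the LP \eqref{MaxFracMatch}, and relaxing an integrality constraint can only (weakly) increase the maximum of the common objective $\sum_{e\in E}x_e$; therefore $m(G)\le m_f(G)$.

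Next, for $m_f(G)\le v(G)$, I would take a maximum fractional matching $\boldsymbol{x}^\star$ and a minimum vertex cover $C\subseteq V$, so $|C|=v(G)$, and run a short double-counting argument. Since $C$ is a vertex cover, every edge $e\in E$ has at least one endpoint in $C$, hence
\[
m_f(G)=\sum_{e\in E}x_e^\star\ \le\ \sum_{e\in E}x_e^\star\,\bigl|\{\,v\in C:\ v\text{ incident to }e\,\}\bigr|\ =\ \sum_{v\in C}\ \sum_{e\text{ incident to }v}x_e^\star\ \le\ \sum_{v\in C}1\ =\ v(G),
\]
where the middle equality exchanges the order of summation and the last inequality applies the fractional matching constraint at each $v\in C$. (Equivalently, one can observe that \eqref{MaxFracMatch} is the LP dual of the fractional relaxation of \eqref{minvertexcov} and apply weak LP duality together with the fact that an integral vertex cover is a feasible solution of that relaxation.)

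Finally, for bipartite $G$, I would invoke K\"onig's theorem, which gives $m(G)=v(G)$ for bipartite graphs; sandwiching with the chain above yields $m(G)\le m_f(G)\le v(G)=m(G)$, so all three quantities coincide. The only substantive ingredient is K\"onig's theorem — or, equivalently, the total unimodularity of the incidence matrix of a bipartite graph, which makes the fractional matching polytope integral and, via LP duality, simultaneously delivers $m_f(G)=m(G)$ and $m_f(G)=v(G)$. Everything else is a one-line containment or an averaging step, so I do not expect any real obstacle here; the mathematical content rests entirely on that classical theorem, which may alternatively just be cited to \cite{scheinerman2011fractional}.
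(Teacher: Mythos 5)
Your proposal is correct. Note, however, that the paper does not actually prove Proposition~\ref{prop:duality} at all: it is stated as a known result and attributed to the fractional graph theory literature (\cite{scheinerman2011fractional}), and no argument for it appears in the appendix. Your write-up therefore supplies the standard textbook proof of the cited fact rather than paralleling or diverging from an in-paper argument. All three steps are sound: the containment of the feasible region of \eqref{maxmatch} in that of \eqref{MaxFracMatch} gives $m(G)\leq m_f(G)$; the double-counting (equivalently, weak LP duality against the fractional vertex cover relaxation of \eqref{minvertexcov}) gives $m_f(G)\leq v(G)$; and K\"onig's theorem closes the sandwich in the bipartite case. One small point worth keeping in mind for consistency with the paper's setting: the proposition is applied there to the graph representation after self-loops have been eliminated via dummy vertices, so your argument, which implicitly assumes loop-free edges with two (not necessarily distinct counted-once) endpoints, matches the intended use without further modification.
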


In what follows, we assume that each server in the distributed storage system can serve up to one request at each moment, i.e., $\boldsymbol{\mu}=(\mu_1,\dots,\mu_n)=(1,\dots,1)$. Thus, $\mathcal{S}(\mathbf{G},\boldsymbol{\mu})$ and $\lambda^\star(\mathbf{G},\boldsymbol{\mu})$ only depend on the generator matrix $\mathbf{G}$ and are respectively denoted by $\mathcal{S}(\mathbf{G})$ and $\lambda^\star(\mathbf{G})$. Next, we present an example to show how the service rate of a code is connected to the matching and the vertex cover problems.

\subsection{Example of Equivalence}\label{example}
Here, we present an example to give more intuition about the subsequent results and to provide a sketch of the proofs. Consider a distributed storage system in which files $f_1$, $f_2$, and $f_3$ are stored across $7$ servers, labeled $1,\dots,7$, using a binary $[7,3]_2$ simplex code with the service rate $\mu_l=1$, $l \in [7]$. The generator matrix of this code is given by:

\[
     \mathbf{G}= \bordermatrix[{[]}]{%
      & 1 & 2 & 3 & 4 & 5 & 6 & 7 \cr
      & 1 & 0 & 1 & 0 & 1 & 0 & 1 \cr
      & 0 & 1 & 1 & 0 & 0 & 1 & 1 \cr
      & 0 & 0 & 0 & 1 & 1 & 1 & 1 \cr
     },
   \] 
   
where the number above each column shows the label of the corresponding column (server). Fig.~\ref{fig:DisStorSys} depicts this distributed storage system. 
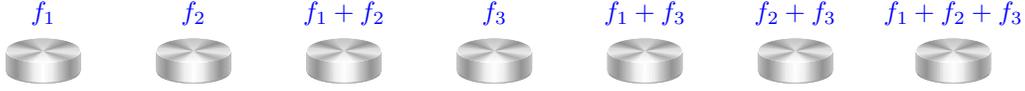
\begin{figure}[hbt]{
\begin{center}
\begin{tikzpicture}[scale=1]
\tikzset{pics/.cd,
  disc/.style={
    code={
      \fill [white] ellipse [x radius=2, y radius=2/3];
      \path [left color=black!50, right color=black!50, middle color=black!25] 
        (-2+.05,-1.1) arc (180:360:2-.05 and 2/3-.05*2/3) -- cycle;
      \path [top color=black!25, bottom color=white] 
        (0,.05*2/3) ellipse [x radius=2-.05, y radius=2/3-.05*2/3];
      \path [left color=black!25, right color=black!25, middle color=white] 
        (-2,0) -- (-2,-1) arc (180:360:2 and 2/3) -- (2,0) arc (360:180:2 and 2/3);
      \foreach \r in {225,315}
        \foreach \i [evaluate={\s=30;}] in {0,2,...,30}
          \fill [black, fill opacity=1/50] 
            (0,0) -- (\r+\s-\i:2 and 2/3) -- ++(0,-1) 
            arc (\r+\s-\i:\r-\s+\i:2 and 2/3) -- ++(0,1) -- cycle;
      \foreach \r in {45,135}
        \foreach \i [evaluate={\s=30;}] in {0,2,...,30}
          \fill [black, fill opacity=1/50] 
            (0,0) -- (\r+\s-\i:2 and 2/3) 
            arc (\r+\s-\i:\r-\s+\i:2 and 2/3)  -- cycle;
    }
  },
  disc bottom/.style={
    code={
      \foreach \i in {0,2,...,30}
        \fill [black, fill opacity=1/60] (0,-1.1) ellipse [x radius=2+\i/40, y radius=2/3+\i/60];
      \path pic[scale=0.2] {disc};
    }
  }
  }
\draw node  at (6.1,-2.5) {\small \textcolor{blue}{$f_1+f_2+f_3$}};
\path (6.1,-3.) pic[scale=0.05] {disc bottom} (6.1,-3.) pic[scale=0.25] {disc};
\draw node  at (4,-2.5) {\small \textcolor{blue}{$f_2+f_3$}};
\path (4,-3.) pic[scale=0.05] {disc bottom} (4,-3.) pic[scale=0.25] {disc};
\draw node  at (2,-2.5) {\small \textcolor{blue}{$f_1+f_3$}};
\path (2,-3.) pic[scale=0.05] {disc bottom} (2,-3.) pic[scale=0.25] {disc};
\draw node  at (0,-2.5) {\small \textcolor{blue}{$f_3$}};
\path (0,-3.) pic[scale=0.05] {disc bottom} (0,-3.) pic[scale=0.25] {disc};
\draw node  at (-2,-2.5) {\small \textcolor{blue}{$f_1+f_2$}};
\path (-2,-3.) pic[scale=0.05] {disc bottom} (-2,-3.) pic[scale=0.25] {disc};
\draw node  at (-4,-2.5) {\small \textcolor{blue}{$f_2$}};
\path (-4,-3.) pic[scale=0.05] {disc bottom} (-4,-3.) pic[scale=0.25] {disc};
\draw node  at (-6,-2.5) {\small \textcolor{blue}{$f_1$}};
\path (-6,-3.) pic[scale=0.05] {disc bottom} (-6,-3.) pic[scale=0.25] {disc};
\end{tikzpicture}\vspace{-0.5cm}
\end{center}
}
\caption{A distributed storage system consists of $7$ servers storing files $f_1$, $f_2$, and $f_3$ using a binary $[7,3]_2$ simplex code.}\label{fig:DisStorSys}
\end{figure}

The recovery sets for each file are given by
\begin{align*}
& \mathcal{R}_{1}=\{R_{1,1},\dots,R_{1,4}\}=\{\{0,1\},\{2,3\},\{4,5\},\{6,7\}\}\\
& \mathcal{R}_{2}=\{R_{2,1},\dots,R_{2,4}\}=\{\{0,2\},\{1,3\},\{4,6\},\{5,7\}\}   
\\ &
\mathcal{R}_{3}=\{R_{3,1},\dots,R_{3,4}\}=\{\{0,4\},\{1,5\},\{2,6\},\{3,7\}\}  
\end{align*}

The graph representation of $[7,3]_2$ simplex code is drawn in Fig.~\ref{fig:GraphRep}, which is a bipartite graph. The vertices ${\emptyset_{f_1}}$, ${\emptyset_{f_2}}$ and ${\emptyset_{f_3}}$ are the dummy vertices added to the graph for the purpose of removing the self-loops of systematic vertices $f_1$, $f_2$, and $f_3$, respectively. The edges with color magenta, green, and blue represent recovery sets for files $f_1$, $f_2$, and $f_3$, respectively. Moreover, the label $\lambda_{i,j}$ above an edge indicates the portion of requests for file $f_i$ that is assigned to the recovery set $R_{i,j}$. 
\begin{figure}[hbt]
\begin{center}
\begin{tikzpicture}[scale=1]
\def\horzgap{2.5in}; 
\def \gapVN{0.6in}; 
\def \gapCN{0.6in}; 
\def\nodewidth{0.4in};
\def \edgewidth{0.05in};

\tikzstyle{right} = [circle, draw, line width=0.05mm, inner sep=0mm, fill=white, minimum size=\nodewidth]
\tikzstyle{left} = [circle, draw, line width=0.05mm, inner sep=0mm, fill=white, minimum size=\nodewidth]

\foreach \vn in {1,2,3,4}{
 \node[left] (vn\vn) at (0,-\vn*\gapVN) {};
}                    

\path (vn1) ++(0,0) node()[scale=0.3, inner sep=0mm] {\Huge{$\color{black}{f_1}$}};
\path (vn2) ++(0,0) node()[scale=0.3, inner sep=0mm] {\Huge{$\color{black}{f_2}$}};
\path (vn3) ++(0,0) node()[scale=0.3, inner sep=0mm] {\Huge{$\color{black}{f_3}$}};
\path (vn4) ++(0,0) node()[scale=0.25, inner sep=0mm] {\huge{$\color{black}{f_1+f_2+f_3}$}};

\path (vn1)++(-0.8cm,0in) node ()[scale=0.33] {\Huge{$\color{black}{1}$}};
\path (vn2)++(-0.8cm,0in) node ()[scale=0.33] {\Huge{$\color{black}{2}$}};
\path (vn3)++(-0.8cm,0in) node ()[scale=0.33] {\Huge{$\color{black}{4}$}};
\path (vn4)++(-0.8cm,0in) node ()[scale=0.33] {\Huge{$\color{black}{7}$}};

\foreach \cn in {1,...,6}{
\node[right] (cn\cn) at (\horzgap,-\cn*\gapCN+0.5in) {};
}

\draw[line width=0.25mm, magenta] (vn1.east)--(cn1.west) node [midway,above] {$\lambda_{1,1}$};
\draw[line width=0.25mm, green] (vn2.east)--(cn2.west)node [midway,above] {$\lambda_{2,1}$};
\draw[line width=0.25mm, blue] (vn3.east)--(cn3.west) node [pos=0.9,above] {$\lambda_{3,1}$};

\draw[line width=0.25mm, green] (vn1.east)--(cn4.west)node [pos=0.6,above] {$\lambda_{2,2}$};
\draw[line width=0.25mm, magenta] (vn2.east)--(cn4.west)node [pos=0.25,above] {$\lambda_{1,2}$};
\draw[line width=0.25mm, blue] (vn4.east)--(cn4.west)node [pos=0.6,above] {$\lambda_{3,4}$};

\draw[line width=0.25mm, blue] (vn1.east)--(cn5.west)node [pos=0.9,above] {$\lambda_{3,2}$};
\draw[line width=0.25mm, magenta] (vn3.east)--(cn5.west)node [pos=0.25,above] {$\lambda_{1,3}$};
\draw[line width=0.25mm, green] (vn4.east)--(cn5.west)node [pos=0.7,above] {$\lambda_{2,4}$};
              
\draw[line width=0.25mm, blue] (vn2.east)--(cn6.west)node [pos=0.9,above] {$\lambda_{3,3}$};
\draw[line width=0.25mm, green] (vn3.east)--(cn6.west)node [pos=0.35,above] {$\lambda_{2,3}$};
\draw[line width=0.25mm, magenta] (vn4.east)--(cn6.west)node [midway,above] {$\lambda_{1,4}$};

\def\moveX {1.8*\nodewidth};
\def\moveXA {2*\nodewidth};

\path (cn1)++(0,0in) node ()[scale=0.3] {\Huge{$\color{black}{\emptyset_{f_1}}$}};
\path (cn2)++(0,0in) node ()[scale=0.3] {\Huge{$\color{black}{\emptyset_{f_2}}$}};
\path (cn3)++(0,0in) node ()[scale=0.3] {\Huge{$\color{black}{\emptyset_{f_3}}$}};
\path (cn4)++(0,0in) node ()[scale=0.3] {\Huge{$\color{black}{f_1+f_2}$}};
\path (cn5)++(0,0in) node ()[scale=0.3] {\Huge{$\color{black}{f_1+f_3}$}};
\path (cn6)++(0,0in) node ()[scale=0.3] {\Huge{$\color{black}{f_2+f_3}$}};

\path (cn1)++(0.8cm,0cm) node ()[scale=0.33] {\Huge{$\color{black}{0}$}};
\path (cn2)++(0.8cm,0cm) node ()[scale=0.33] {\Huge{$\color{black}{0}$}};
\path (cn3)++(0.8cm,0cm) node ()[scale=0.33] {\Huge{$\color{black}{0}$}};
\path (cn4)++(0.8cm,0cm) node ()[scale=0.33] {\Huge{$\color{black}{3}$}};
\path (cn5)++(0.8cm,0cm) node ()[scale=0.33] {\Huge{$\color{black}{5}$}};
\path (cn6)++(0.8cm,0cm) node ()[scale=0.33] {\Huge{$\color{black}{6}$}};

\end{tikzpicture}\vspace{-0.5cm}
\end{center}
\caption{Graph representation of $[7,3]_2$ simplex code.}\label{fig:GraphRep}
\end{figure}
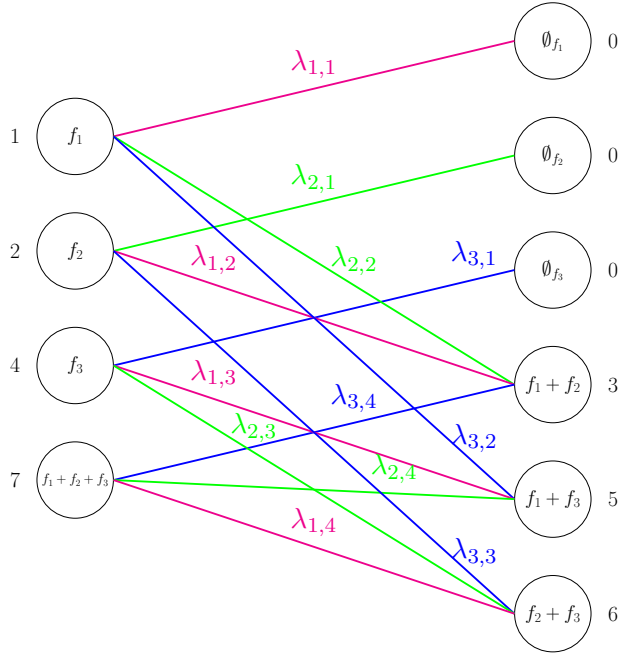

The service rate region $\mathcal{S}(\mathbf{G})$ of this system is the set of vectors ${\boldsymbol{\lambda}}=(\lambda_{1},\lambda_{2},\lambda_{3})$ for which there exist $\lambda_{i,j}$'s, $i \in [3]$ and $j \in [4]$, satisfying the set of constraints~\eqref{eq:1} as follows:
\begin{align}
&\text{\eqref{eq:1a}}\Rightarrow
\begin{cases}
\lambda_{1}=\lambda_{1,1}+\lambda_{1,2}+\lambda_{1,3}+\lambda_{1,4}\\
\lambda_{2}=\lambda_{2,1}+\lambda_{2,2}+\lambda_{2,3}+\lambda_{2,4}\\
\lambda_{3}=\lambda_{3,1}+\lambda_{3,2}+\lambda_{3,3}+\lambda_{3,4}
\end{cases}\label{eq:ExEqual}\\
&\text{\eqref{eq:1b}}\Rightarrow
\begin{cases}
\lambda_{1,1}+\lambda_{2,2}+\lambda_{3,2} \leq 1\\
\lambda_{2,1}+\lambda_{1,2}+\lambda_{3,3} \leq 1\\
\lambda_{3,1}+\lambda_{1,3}+\lambda_{2,3} \leq 1\\
\lambda_{3,4}+\lambda_{2,4}+\lambda_{1,4} \leq 1\\
\lambda_{2,2}+\lambda_{1,2}+\lambda_{3,4} \leq 1\\
\lambda_{3,2}+\lambda_{1,3}+\lambda_{2,4} \leq 1\\
\lambda_{3,3}+\lambda_{2,3}+\lambda_{1,4} \leq 1
\end{cases}\label{eq:ExInequal}\\
&\text{\eqref{eq:pos}}\Rightarrow
\begin{cases}
\lambda_{i,j} \in \mathbb{R}_{\geq 0},~~~~ \text{for all} ~~~ i\in [3],~j\in [4]
\end{cases}\label{eq:ExBound}
\end{align}\vspace{0.01cm}

Fig.~\ref{SerRateReg} shows the service rate region $\mathcal{S}(\mathbf{G})$ of this coded storage system. system.\vspace{-0.1cm}
\begin{figure}[hbt]
\begin{center}
\begin{tikzpicture}[>=stealth,tdplot_main_coords]
    \coordinate (O) at (0,0,0);
    \draw[->] (O) -- (3.5,0,0) coordinate (x axis); 
        \foreach \x/\xtext in {2.5} 
           \draw (\x, 0.1,0) -- (\x, -0.1,0) node [below] {$4$};
    \draw[->] (O) -- (0,3.5,0) coordinate (y axis);
        \foreach \y/\ytext in {2.5}
           \draw (0.1,\y,0) -- (-0.1,\y,0) node [below] {$4$};
    \draw[->] (O) -- (0,0,2.75) coordinate (z axis);
        \foreach \z/\ztext in {2}
            \draw (-0.05,0.05,\z) -- (0.05,-0.05,\z) node [left] {$4$};

    \coordinate[label=above:{\large $\lambda_1$}] (lambda_a) at (3.5, 0, 0.1);
    \coordinate[label=above:{\large $\lambda_2$}] (lambda_b) at (0, 3.5, 0.1);
    \coordinate[label=right:{\large $\lambda_3$}] (lambda_c) at (0, 0.1, 2.75);

    \filldraw[draw=blue,fill=blue,fill opacity=0.2] (0,0,2) -- (2.5,0,0) -- (0,2.5,0) -- (0,0,2) -- (0,0,0) -- (0,2.5,0);
    \filldraw[draw=blue,fill=blue,fill opacity=0.2] (0,0,0) -- (2.5,0,0);
    
\end{tikzpicture}\vspace{-0.5cm}
\end{center}
\caption{Service rate region of $[7,3]_2$ simplex code.}\label{SerRateReg}
\end{figure}
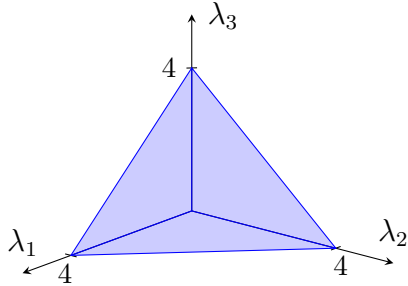

Based on~\eqref{FracMatch}, a fractional matching vector ${\boldsymbol{x}=(\lambda_{1,1},\dots,\lambda_{1,4},\lambda_{2,1},\dots,\lambda_{2,4},\lambda_{3,1},\dots,\lambda_{3,4})}$ of the graph depicted in Fig.~\ref{fig:GraphRep}, satisfies the constraints~\eqref{eq:ExInequal} and \eqref{eq:ExBound}. Thus, according to Definition~\ref{def:SerRateReg}, a vector ${{\boldsymbol{\lambda}}=(\lambda_{1},\lambda_{2},\lambda_{3})}$ obtained from $\boldsymbol{x}$ using~\eqref{eq:ExEqual} is in the service rate region of $[7,3]_2$ simplex code. Conversely, for a vector ${\boldsymbol{\lambda}}$ in the service rate region of $[7,3]_2$ simplex code, there exist ${\lambda_{i,j}}$'s, ${i\in [3]}$ and ${j\in [4]}$, satisfying the constraints~\eqref{eq:ExInequal} and \eqref{eq:ExBound}, that define a fractional matching vector ${\boldsymbol{x}=(\lambda_{i,j}:i\in [3]~\text{and}~j\in [4])}$ in the graph representation of $[7,3]_2$ simplex code drawn in Fig.~\ref{fig:GraphRep}.

Based on~\eqref{MaxFracMatch}, a maximum fractional matching vector $\boldsymbol{x}^\star$ is obtained by solving the following LP:
\begin{align}\label{exmaxmatch}
\max~~{\sum_{i=1}^{3}\sum_{j=1}^{4}\lambda_{i,j}}\qquad
\text{s.t.}~~\eqref{eq:ExInequal}~\text{and}~\eqref{eq:ExBound}
~~\text{hold}.
\end{align}
We want to show that the vector ${{\boldsymbol{\lambda}}=(\lambda_{1},\lambda_{2},\lambda_{3})}$ obtained from $\boldsymbol{x}^\star$ using~\eqref{eq:ExEqual} is in fact a maximum demand vector ${\boldsymbol{\lambda}}^\star$ in the service rate region of $[7,3]_2$ simplex code. From \eqref{eq:ExEqual}, ${\sum_{i=1}^{3}\sum_{j=1}^{4}\lambda_{i,j}={\lambda_{1}+\lambda_{2}+\lambda_{3}}}$. Thus, it can be easily verified that $\boldsymbol{x}^\star$ provides a solution for the following LP:
\begin{align}\label{exmaxrate}
\max~~{\lambda_{1}+\lambda_{2}+\lambda_{3}}\qquad
\text{s.t.}~~\eqref{eq:ExEqual},~\eqref{eq:ExInequal},~\eqref{eq:ExBound}~~\text{hold}.
\end{align}      
Moreover, according to~\eqref{maxrate}, an instance of maximum demand vector is obtained by solving the LP in~\eqref{exmaxrate}. Thus, the vector ${{\boldsymbol{\lambda}}=(\lambda_{1},\lambda_{2},\lambda_{3})}$ obtained from $\boldsymbol{x}^\star$ using~\eqref{eq:ExEqual} is a 
maximum demand vector ${\boldsymbol{\lambda}}^\star$. On the other hand, for an instance of  $\boldsymbol{{\lambda}}^\star$ in the service rate region of $[7,3]_2$ simplex code obtained from~\eqref{exmaxrate}, there exists a fractional matching vector $\boldsymbol{x}$ which according to the same reasoning, provides a solution for~\eqref{exmaxmatch}. Thus, the vector $\boldsymbol{x}$ is a maximum fractional matching vector $\boldsymbol{x}^\star$ in the graph representation of $[7,3]_2$ simplex code in Fig.~\ref{fig:GraphRep}. Since a maximum demand vector ${\boldsymbol{{\lambda}}^\star=(\lambda_{1}^\star,\lambda_{2}^\star,\lambda_{3}^\star)}$ is obtained from a maximum fractional matching vector $\boldsymbol{x}^\star$ by~\eqref{eq:ExEqual}, it follows that ${\lambda_{1}^\star+\lambda_{2}^\star+\lambda_{3}^\star=\sum\lambda_{i,j}^\star}$, where $\lambda_{i,j}^\star$'s are the elements of ${\boldsymbol{x}}^\star$. Hence, we have ${\lambda^\star(\mathbf{G})=m_f(G)}$, and based on Proposition~\ref{prop:duality}, ${{m(G)\leq \lambda^\star(\mathbf{G})\leq v(G)}}$ holds.

We show that the service capacity of $[7,3]_2$ simplex code is $4$. The proof consists of two parts. First, we need to prove the converse by showing that the service capacity cannot be bigger than $4$. It is easy to see that the set of vertices $\{f_1,f_2,f_3,f_1+f_2+f_3\}$ is a minimum vertex cover for the graph in Fig.~\ref{fig:GraphRep}. Thus, the vertex cover number of this graph is $v(G)=4$ which indicates that $\lambda^\star(\mathbf{G})\leq 4$. Next, we show the achievability proof by showing that there exists a demand vector ${{\boldsymbol{\lambda}}=(\lambda_{1},\lambda_{2},\lambda_{3})}$ in the service rate region such that $\lambda_1+\lambda_2+\lambda_3=4$. For this purpose, one can consider the set of edges labeled by $\lambda_{1,1}$, $\lambda_{1,2}$, $\lambda_{1,3}$, and $\lambda_{1,4}$ as a matching in the graph. Corresponding to this matching, a demand vector ${{\boldsymbol{\lambda}}=(4,0,0)}$ is obtained by applying~\eqref{eq:ExEqual}.

\subsection{Equivalence Results}\label{Equivalence-Results}
We first show an equivalence between the service rate problem and the fractional matching problem. This equivalence result allow us to derive bounds on the service capacity of a coded storage system and then to recover the service capacity of the binary simplex code whose graph representation is bipartite. 

\begin{theorem}\label{theorem1}
Consider an $(\mathbf{G},\boldsymbol{\mu})$ system with the service rate $\boldsymbol{\mu}=\mathbf{1}_n$. There exists a demand vector $\boldsymbol{\lambda}=(\lambda_1,\dots,\lambda_k)$ in the service rate region of this system if and only if there exists a fractional matching vector ${\boldsymbol{x}=(\lambda_{i,j}:i\in [k]~\text{and}~j\in [t_i])}$ in the graph representation of $[n,k]_q$ code such that $\boldsymbol{\lambda}$ and $\boldsymbol{x}$ are related based on~\eqref{eq:1a}.
\end{theorem}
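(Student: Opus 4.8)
The plan is to recognize that the constraint system~\eqref{eq:1} defining the service rate region, after relabelling the variables $\lambda_{i,j}$ along the edges of the graph representation $G(V,E)$, is literally the fractional matching system~\eqref{FracMatch} together with the linking equalities~\eqref{eq:1a}; the whole proof is a change of variables plus a check that nothing is gained or lost at the dummy vertices. So this argument has essentially no computational content and is really a careful bookkeeping exercise.

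First I would set up the correspondence. In the size-at-most-two setting, each recovery set $R_{i,j}$ is either a pair $\{l_1,l_2\}\subseteq[n]$, which the construction of Sec.~\ref{graph-rep} realizes as an edge $e_{i,j}$ of $G$ joining the server-vertices $l_1$ and $l_2$, or a singleton, which it realizes as an edge $e_{i,j}=\{0,r\}$ joining the server-vertex $r\in[n]$ to a fresh degree-one dummy vertex. In both cases the edge $e_{i,j}$ has at least one real server among its endpoints, and $R_{i,j}\mapsto e_{i,j}$ is a bijection from the recovery sets onto the edges of $G$ (edges carry colours, so distinct recovery sets give distinct edges even when they share the same underlying vertex pair). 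Accordingly I identify the service-rate variables with the fractional-matching variables by $x_{e_{i,j}}:=\lambda_{i,j}$, and conversely $\lambda_{i,j}:=x_{e_{i,j}}$.

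For the forward ($\Rightarrow$) direction, take $\boldsymbol{\lambda}\in\mathcal{S}(\mathbf{G},\mathbf{1}_n)$ with witnesses $\lambda_{i,j}$ obeying~\eqref{eq:1a}--\eqref{eq:pos}, and define $\boldsymbol{x}$ by the identification above. For a server-vertex $l\in[n]$ the edges incident to $l$ are exactly the $e_{i,j}$ with $l\in R_{i,j}$, so $\sum_{e\text{ incident to }l}x_e=\sum_{i,j:\,l\in R_{i,j}}\lambda_{i,j}\le\mu_l=1$ by~\eqref{eq:1b}; since every edge touches some server $l$, this also yields $0\le x_e\le1$ for all $e$ (with~\eqref{eq:pos} for the lower bound), and hence the single incidence inequality at each degree-one dummy vertex is already implied. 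Thus $\boldsymbol{x}$ satisfies~\eqref{FracMatch} and is linked to $\boldsymbol{\lambda}$ via~\eqref{eq:1a} by construction. The converse ($\Leftarrow$) direction is the same computation run backwards: given a fractional matching $\boldsymbol{x}$ and $\boldsymbol{\lambda}$ linked to it by~\eqref{eq:1a}, set $\lambda_{i,j}:=x_{e_{i,j}}$; then~\eqref{eq:1a} holds by hypothesis, $\lambda_{i,j}\in[0,1]\subseteq\mathbb{R}_{\ge0}$ gives~\eqref{eq:pos}, and the fractional-matching inequality at server-vertex $l\in[n]$ is exactly~\eqref{eq:1b} for that $l$ (recall $\mu_l=1$), so $\boldsymbol{\lambda}\in\mathcal{S}(\mathbf{G},\mathbf{1}_n)$.

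I expect no genuine obstacle here; the only point deserving care is the bookkeeping around the self-loop elimination, namely checking that (i) the dummy vertices impose no constraint beyond those already in~\eqref{eq:1}, because each is a leaf whose lone incidence inequality is dominated by the inequality at the adjacent real server, and dually (ii) each server constraint~\eqref{eq:1b} is reproduced by exactly one vertex of $G$. Everything else is the tautological identification $x_{e_{i,j}}=\lambda_{i,j}$ together with the fact that $\mu_l=1$ for every $l$.
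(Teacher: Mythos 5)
Your proposal is correct and follows essentially the same route as the paper's proof: identify each $\lambda_{i,j}$ with the value on the corresponding edge and observe that, with $\boldsymbol{\mu}=\mathbf{1}_n$, constraints~\eqref{eq:1b}--\eqref{eq:pos} coincide with the fractional matching constraints~\eqref{FracMatch}, with~\eqref{eq:1a} linking $\boldsymbol{x}$ and $\boldsymbol{\lambda}$ in both directions. You merely make explicit the bookkeeping the paper leaves implicit (the degree-one dummy vertices imposing no extra constraint, and $x_e\le 1$ following from the server inequalities), which is a welcome but not substantively different refinement.
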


\begin{Corollary}\label{corollary1}
Consider an ${(\mathbf{G},\boldsymbol{\mu})}$ system with ${\boldsymbol{\mu}=\mathbf{1}_n}$. There exists a maximum demand vector $\boldsymbol{\lambda}^\star=(\lambda_1^\star,\dots,\lambda_k^\star)$ in the service rate region $S(\mathbf{G})$ of this storage system if and only if there exists a maximum fractional matching vector ${\boldsymbol{x}^\star=(\lambda_{i,j}^\star:i\in [k]~\text{and}~j\in [t_i])}$ in the graph representation of $[n,k]_q$ code such that $\boldsymbol{\lambda}^\star$ and $\boldsymbol{x}^\star$ are related based on~\eqref{eq:1a}.
\end{Corollary}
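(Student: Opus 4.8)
The plan is to derive Corollary~\ref{corollary1} as an immediate consequence of Theorem~\ref{theorem1}, exactly mirroring the reasoning laid out in the worked example of Section~\ref{example}. The two LPs involved---the service-capacity LP~\eqref{maxrate} and the maximum-fractional-matching LP~\eqref{MaxFracMatch} specialized to the graph representation of the $[n,k]_q$ code---have, by constraints~\eqref{eq:1a}, the same objective value: summing~\eqref{eq:1a} over $i \in [k]$ gives $\sum_{i=1}^k \lambda_i = \sum_{i=1}^k \sum_{j=1}^{t_i} \lambda_{i,j}$, so $\sum_i \lambda_i = \sum_{e \in E} x_e$ whenever $\boldsymbol{\lambda}$ and $\boldsymbol{x}$ are related by~\eqref{eq:1a}. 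The key observation is that when $\boldsymbol{\mu} = \mathbf{1}_n$, constraints~\eqref{eq:1b} are precisely the vertex (capacity) constraints of a fractional matching in $G(V,E)$: each server $l \in [n]$ corresponds to a vertex, each recovery set $R_{i,j}$ of size at most $2$ corresponds to an edge, and $l \in R_{i,j}$ means edge $e = R_{i,j}$ is incident to vertex $l$; the $0$-labeled dummy vertices absorb the size-$1$ recovery sets without adding constraints. Hence the feasible region of~\eqref{maxrate}, projected onto the $\lambda_{i,j}$ coordinates, coincides with the feasible region of the fractional-matching LP, and the two LPs are the same optimization problem under the bijective correspondence of Theorem~\ref{theorem1}.

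Concretely I would argue both directions. For the forward direction, suppose $\boldsymbol{\lambda}^\star$ is a maximum demand vector, so it is feasible for~\eqref{maxrate} with $\sum_i \lambda_i^\star = \lambda^\star(\mathbf{G})$, and it is witnessed by some $\lambda_{i,j}^\star$ satisfying~\eqref{eq:1}. By Theorem~\ref{theorem1}, $\boldsymbol{x}^\star = (\lambda_{i,j}^\star : i \in [k], j \in [t_i])$ is a fractional matching vector in the graph representation. If $\boldsymbol{x}^\star$ were not maximum, there would be a fractional matching vector $\boldsymbol{x}'$ with $\sum_e x'_e > \sum_e x_e^\star$; applying the ``if'' part of Theorem~\ref{theorem1} to $\boldsymbol{x}'$ produces a demand vector $\boldsymbol{\lambda}'$ in the service rate region with $\sum_i \lambda'_i = \sum_e x'_e > \sum_e x^\star_e = \sum_i \lambda_i^\star = \lambda^\star(\mathbf{G})$, contradicting the maximality of $\lambda^\star(\mathbf{G})$. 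The reverse direction is symmetric: starting from a maximum fractional matching $\boldsymbol{x}^\star$, Theorem~\ref{theorem1} yields a demand vector $\boldsymbol{\lambda}^\star$ related by~\eqref{eq:1a} with $\sum_i \lambda_i^\star = \sum_e x_e^\star = m_f(G)$, and any demand vector achieving a strictly larger sum would, again via Theorem~\ref{theorem1}, yield a fractional matching beating $m_f(G)$, a contradiction. This also records the identity $\lambda^\star(\mathbf{G}) = m_f(G)$, which feeds the subsequent bounds $m(G) \le \lambda^\star(\mathbf{G}) \le v(G)$ from Proposition~\ref{prop:duality}.

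There is no serious obstacle here: the corollary is a ``same LP, hence same optimum'' argument sitting on top of Theorem~\ref{theorem1}. The only points needing a little care are bookkeeping ones---making sure the objective values genuinely agree under the correspondence~\eqref{eq:1a} (handled by summing~\eqref{eq:1a} over $i$), confirming that a \emph{maximum} demand vector exists at all (guaranteed since $\mathcal{S}(\mathbf{G},\boldsymbol{\mu})$ is non-empty, closed and bounded by Proposition~\ref{prop:convexity}, so the LP~\eqref{maxrate} attains its optimum, and likewise the fractional-matching LP is feasible and bounded so $m_f(G)$ is attained), and handling the dummy vertices that remove self-loops so that size-$1$ recovery sets map correctly to edges incident to a zero-labeled vertex. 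The main ``work,'' such as it is, lies entirely in Theorem~\ref{theorem1}; Corollary~\ref{corollary1} is then a short contradiction argument in each direction.
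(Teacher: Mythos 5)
Your proposal is correct and follows essentially the same route as the paper: both derive the corollary from Theorem~\ref{theorem1} by observing that under the correspondence~\eqref{eq:1a} the objective of the service-capacity LP~\eqref{maxrate} equals that of the fractional-matching LP~\eqref{MaxFracMatch}, so optima map to optima in both directions. Your explicit contradiction arguments and the remark on attainment of the optima (via Proposition~\ref{prop:convexity}) merely make precise what the paper states as ``it can be easily verified.''
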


\begin{theorem}\label{theorem2}
Consider an ${(\mathbf{G},\boldsymbol{\mu})}$ system with the service rate ${\boldsymbol{\mu}=\mathbf{1}_n}$. The service capacity ${\lambda^\star(\mathbf{G})}$ of this system is lower bounded by the matching number and upper bounded by the vertex cover number of the graph representation of $[n,k]_q$ code. i.e., ${m(G) \leq \lambda^\star(\mathbf{G})=m_f(G) \leq v(G)}$.
\end{theorem}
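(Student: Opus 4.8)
The plan is to obtain Theorem~\ref{theorem2} as an essentially immediate consequence of Corollary~\ref{corollary1} (equivalently Theorem~\ref{theorem1}) together with Proposition~\ref{prop:duality}. The only piece of real content is the middle equality $\lambda^\star(\mathbf{G})=m_f(G)$; once that is in hand, the two flanking inequalities come for free from the generic chain $m(G)\le m_f(G)\le v(G)$, which Proposition~\ref{prop:duality} guarantees for any graph.

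First I would prove $\lambda^\star(\mathbf{G})\le m_f(G)$. Let $\boldsymbol{\lambda}^\star=(\lambda_1^\star,\dots,\lambda_k^\star)$ be a maximum demand vector, so that $\sum_{i=1}^{k}\lambda_i^\star=\lambda^\star(\mathbf{G})$ by definition. By Theorem~\ref{theorem1} there is a fractional matching vector $\boldsymbol{x}^\star=(\lambda_{i,j}^\star:i\in[k],\,j\in[t_i])$ in the graph representation $G$ of the $[n,k]_q$ code with $\sum_{j=1}^{t_i}\lambda_{i,j}^\star=\lambda_i^\star$ for every $i$, i.e.\ $\boldsymbol{\lambda}^\star$ and $\boldsymbol{x}^\star$ are related through~\eqref{eq:1a}. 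Summing over $i$ gives $\sum_{i,j}\lambda_{i,j}^\star=\sum_{i}\lambda_i^\star=\lambda^\star(\mathbf{G})$, and since the total weight of any fractional matching of $G$ is at most $m_f(G)$, we conclude $\lambda^\star(\mathbf{G})\le m_f(G)$. For the reverse inequality, take a maximum fractional matching vector $\boldsymbol{x}^\star$ of $G$ with $\sum_{e\in E}x_e^\star=m_f(G)$; writing its entries as $\lambda_{i,j}^\star$ according to the color classes and defining $\lambda_i:=\sum_{j}\lambda_{i,j}^\star$, the ``if'' direction of Theorem~\ref{theorem1} places $\boldsymbol{\lambda}=(\lambda_1,\dots,\lambda_k)$ in $\mathcal{S}(\mathbf{G})$, with $\sum_i\lambda_i=\sum_e x_e^\star=m_f(G)$. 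Because $\mathcal{S}(\mathbf{G})$ is closed and bounded (Proposition~\ref{prop:convexity}), the maximum of $\sum_i\lambda_i$ over $\mathcal{S}(\mathbf{G})$ is attained and equals $\lambda^\star(\mathbf{G})$, so $\lambda^\star(\mathbf{G})\ge m_f(G)$. Hence $\lambda^\star(\mathbf{G})=m_f(G)$.

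To finish, I would apply Proposition~\ref{prop:duality} to the graph $G$ to get $m(G)\le m_f(G)\le v(G)$ and substitute $m_f(G)=\lambda^\star(\mathbf{G})$, which yields the claimed sandwich $m(G)\le\lambda^\star(\mathbf{G})=m_f(G)\le v(G)$. An equivalent and perhaps cleaner way to package the equality, mirroring the worked example in Section~\ref{example}, is to observe that~\eqref{eq:1a} lets one eliminate the variables $\lambda_i$ and rewrite the objective $\sum_i\lambda_i$ of the LP~\eqref{maxrate} as $\sum_{i,j}\lambda_{i,j}$ subject to~\eqref{eq:1b}–\eqref{eq:pos}, which is literally the fractional-matching LP~\eqref{MaxFracMatch} on $G$; the two LPs therefore share the same optimal value.

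I do not expect a genuine obstacle. The one point requiring a line of care is that the fractional-matching notions of Section~\ref{graph-rep} and Proposition~\ref{prop:duality} are stated for graphs without self-loops, so one should recall that the dummy-vertex reduction turns each size-$1$ recovery set $\{0,r\}$ into an ordinary edge without changing the feasible set of~\eqref{MaxFracMatch} or the value $m_f(G)$; under this reduction everything above applies verbatim.
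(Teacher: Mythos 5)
Your proposal is correct and follows essentially the same route as the paper: establish $\lambda^\star(\mathbf{G})=m_f(G)$ via the equivalence of Theorem~\ref{theorem1}/Corollary~\ref{corollary1} (the paper simply cites Corollary~\ref{corollary1}, while you spell out the two inequalities and the attainment argument, which is a slightly more careful packaging of the same LP-equivalence idea), and then sandwich with Proposition~\ref{prop:duality}. Your closing remark about the dummy-vertex reduction for size-$1$ recovery sets is a sensible point of care that the paper leaves implicit.
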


Note that if the graph representation of a code is bipartite, Proposition~\ref{prop:duality} results ${m(G)=\lambda^\star(\mathbf{G})=v(G)}$.

\begin{theorem}\label{theorem3}
The graph representation of $[2^k-1,k,2^{k-1}]_2$ simplex code, is a bipartite graph.
\end{theorem}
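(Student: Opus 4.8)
The plan is to exhibit an explicit bipartition of the vertex set of the graph representation $G(V,E)$ of the $[2^k-1,k,2^{k-1}]_2$ simplex code, after the dummy vertices have been added to remove self-loops. Recall that the columns of the simplex generator matrix are exactly the $2^k-1$ nonzero vectors of $\mathbb{F}_2^k$, so each server $l\in[n]$ is naturally labeled by a nonzero vector $\mathbf{g}_l\in\mathbb{F}_2^k$, and the $k$ dummy vertices $\emptyset_{f_1},\dots,\emptyset_{f_k}$ can consistently be labeled by the zero vector $\mathbf{0}$ (this matches the convention $\{0,r\}$ for a systematic recovery set used in Section~\ref{graph-rep}). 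Since every file $f_i$ has minimum distance $2^{k-1}$ and the code is simplex, each recovery set has size at most $2$, so $G$ is indeed a graph (with edges of two colors-per-file structure); the key structural fact I would use is that a size-$2$ recovery set $\{l,l'\}$ for $f_i$ satisfies $\mathbf{g}_l+\mathbf{g}_{l'}=\mathbf{e}_i$ over $\mathbb{F}_2$, and a size-$1$ (systematic) recovery set, after adding the dummy vertex, becomes the edge $\{\mathbf{0},\mathbf{e}_i\}$, which also satisfies $\mathbf{g}_{\mathbf{0}}+\mathbf{g}_{\mathbf{e}_i}=\mathbf{0}+\mathbf{e}_i=\mathbf{e}_i$. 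Thus \emph{every} edge of $G$ joins two vertices whose labels differ (sum, over $\mathbb{F}_2$) by a weight-one vector.

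The main step is then to define the bipartition by parity of Hamming weight: let $A=\{v\in V:\operatorname{wt}(\mathbf{g}_v)\text{ is even}\}$ and $B=\{v\in V:\operatorname{wt}(\mathbf{g}_v)\text{ is odd}\}$, where $\mathbf{g}_v$ is the $\mathbb{F}_2^k$-label of $v$ (so all dummy vertices, labeled $\mathbf{0}$, lie in $A$). For any edge $e=\{v,v'\}$ we have $\mathbf{g}_v+\mathbf{g}_{v'}=\mathbf{e}_i$ for some $i$, hence $\operatorname{wt}(\mathbf{g}_v)+\operatorname{wt}(\mathbf{g}_{v'})\equiv\operatorname{wt}(\mathbf{e}_i)=1\pmod 2$, so exactly one of $v,v'$ is in $A$ and the other in $B$. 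Therefore no edge lies inside $A$ or inside $B$, which is precisely the statement that $G$ is bipartite with parts $(A,B)$. I would also remark that $A$ and $B$ are both nonempty and that this argument is insensitive to how many dummy vertices are added or how the colors are distributed, since it only uses the additive relation defining recovery sets.

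The only point requiring a little care — and the step I would flag as the main (minor) obstacle — is checking that \emph{all} recovery sets of the simplex code really do have size $\le 2$ and have the claimed additive form, including the boundary cases: (i) the size-$1$ systematic recovery sets, handled above by assigning label $\mathbf{0}$ to the dummy vertices; and (ii) confirming there are no size-$2$ recovery sets of a degenerate kind, e.g. a ``set'' $\{l,l'\}$ with $\mathbf{g}_l=\mathbf{g}_{l'}$ (impossible here since the simplex code has distinct nonzero columns) or with $\mathbf{g}_l+\mathbf{g}_{l'}=\mathbf{e}_i$ where one endpoint is itself a dummy vertex (already covered). Once these cases are dispatched, the parity argument closes the proof, and combined with Proposition~\ref{prop:duality} and Theorem~\ref{theorem2} it yields $m(G)=\lambda^\star(\mathbf{G})=v(G)$ for the simplex code, as anticipated in the discussion following Theorem~\ref{theorem2}.
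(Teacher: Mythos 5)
Your proposal is correct and follows essentially the same route as the paper: labeling the dummy vertices by $\mathbf{0}$, splitting the vertices by the parity of the Hamming weight of their $\mathbb{F}_2^k$ labels, and using the relation $\mathbf{g}_v+\mathbf{g}_{v'}=\mathbf{e}_i$ (odd weight) to conclude every edge crosses the bipartition. The paper phrases the last step as a contradiction within each part, but the argument is the same.
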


\begin{Corollary}\label{cor:simplex}
For an $(\mathbf{G},\boldsymbol{\mu})$ system with  ${[2^k-1,k,2^{k-1}]_2}$ simplex code and service rate ${\boldsymbol{\mu}=\mathbf{1}_n}$, the service capacity is given by ${m(G)=\lambda^\star(\mathbf{G})=v(G)=2^{k-1}}$. 
\end{Corollary}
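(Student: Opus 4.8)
The plan is to pin down the single number $m(G)=m_f(G)=v(G)$: by Theorem~\ref{theorem3} the graph representation $G$ of the simplex code is bipartite, so Proposition~\ref{prop:duality} collapses these three quantities to a common value, and Theorem~\ref{theorem2} identifies $\lambda^\star(\mathbf{G})$ with $m_f(G)$. Hence it suffices to exhibit a matching in $G$ of size $2^{k-1}$ together with a vertex cover of $G$ of size $2^{k-1}$; the chain $m(G)\le\lambda^\star(\mathbf{G})=m_f(G)\le v(G)$ then squeezes everything to $2^{k-1}$.

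First I would fix coordinates. Identify the $n=2^{k}-1$ servers with the columns of $\mathbf{G}$, i.e.\ with the nonzero vectors of $\mathbb{F}_2^{k}$, and recall that $G$ carries in addition the $k$ dummy vertices $\emptyset_{f_1},\dots,\emptyset_{f_k}$ introduced in Section~\ref{graph-rep} to remove the systematic self-loops. For a fixed $i\in[k]$, two servers $a,b$ form a size-$2$ recovery set for $f_i$ exactly when $a+b=\mathbf{e}_i$, and the systematic recovery set is $\{\emptyset_{f_i},\mathbf{e}_i\}$; adopting the convention that $\emptyset_{f_i}$ plays the role of the zero vector, the color-$i$ edges are precisely the pairs $\{v,\,v+\mathbf{e}_i\}$ with $v\in\mathbb{F}_2^{k}$. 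Since $v\mapsto v+\mathbf{e}_i$ is a fixed-point-free involution of $\mathbb{F}_2^{k}$, the edges of color $i$ form a perfect matching on the $2^{k}$-vertex set $\{\emptyset_{f_i}\}\cup\{\text{servers}\}$; in particular each file has exactly $t_i=2^{k-1}$ recovery sets.

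With this picture both bounds are immediate. For the lower bound, let $M$ be the set of all edges of a single color, say color $1$; by the observation above these $2^{k-1}$ edges are pairwise vertex-disjoint, so $m(G)\ge 2^{k-1}$ (equivalently, the demand vector $(2^{k-1},0,\dots,0)$ obtained from $M$ via~\eqref{eq:1a} lies in $\mathcal{S}(\mathbf{G})$). For the upper bound, let $C$ be the set of servers whose label is a vector of odd Hamming weight in $\mathbb{F}_2^{k}$; then $|C|=\sum_{j\text{ odd}}\binom{k}{j}=2^{k-1}$, and $C$ is a vertex cover because for any color-$i$ edge $\{v,\,v+\mathbf{e}_i\}$ between two servers the weights of $v$ and $v+\mathbf{e}_i$ differ by one, so exactly one endpoint lies in $C$, while the systematic edge $\{\emptyset_{f_i},\mathbf{e}_i\}$ is covered since $\mathbf{e}_i$ has weight $1$. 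Hence $v(G)\le 2^{k-1}$, and combining the two bounds gives $m(G)=\lambda^\star(\mathbf{G})=v(G)=2^{k-1}$.

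The only point requiring a little care is the bookkeeping around the dummy vertices — checking that the systematic recovery set of $f_i$ really corresponds, under the identification, to the color-$i$ edge associated with the pair $\{0,\mathbf{e}_i\}$, so that the color classes are genuinely the involutions $v\mapsto v+\mathbf{e}_i$ — but this follows directly from the construction in Section~\ref{graph-rep}; everything else is a one-line parity argument.
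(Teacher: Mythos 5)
Your proposal is correct and follows essentially the same route as the paper: the upper bound via the vertex cover consisting of the $2^{k-1}$ odd-weight columns (which is exactly the bipartition class from Theorem~\ref{theorem3}), and the lower bound via the $2^{k-1}$ pairwise disjoint recovery sets of a single file taken as a matching, squeezed through ${m(G)\le\lambda^\star(\mathbf{G})=m_f(G)\le v(G)}$. The only difference is cosmetic: you re-derive the parity/involution facts inline instead of quoting Theorem~\ref{theorem3} and the known availability property of the simplex code, which makes the argument a bit more self-contained but not different in substance.
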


\begin{Corollary}\label{cor:integraleq}
Consider an $(\mathbf{G},\boldsymbol{\mu})$ system with ${\boldsymbol{\mu}=\mathbf{1}_n}$. There exists a demand vector $\boldsymbol{\lambda}=(\lambda_1,\dots,\lambda_k)$ in the integral service rate region $S_I(\mathbf{G})$ of this system if and only if there exists a matching vector ${\tilde{\boldsymbol{x}}=(\lambda_{i,j}:i\in [k]~\text{and}~j\in [t_i])}$ in the graph representation of $[n,k]_q$ code such that $\boldsymbol{\lambda}$ and $\tilde{\boldsymbol{x}}$ are related based on~\eqref{eq:1a}.
\end{Corollary}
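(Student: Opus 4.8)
The plan is to closely follow the proof of Theorem~\ref{theorem1}, but with the fractional matching constraints \eqref{FracMatch} replaced by the integral matching constraints \eqref{match}, exploiting the fact that under $\boldsymbol{\mu}=\mathbf{1}_n$ any feasible \emph{integral} assignment of the $\lambda_{i,j}$'s is automatically $\{0,1\}$-valued. Recall from Section~\ref{graph-rep} that the graph $G(V,E)$ carries a bijection between the edges $e\in E$ and the pairs $(i,j)$ with $i\in[k]$, $j\in[t_i]$, under which a vertex $v\in V$ is incident to the edge of $(i,j)$ exactly when $v$ lies in the (dummy-processed) recovery set $R_{i,j}$; in particular, for a real vertex $l\in[n]$ the edges incident to $l$ are precisely the $(i,j)$ with $l\in R_{i,j}$, so the matching inequality at $l$ literally coincides with the service-rate constraint \eqref{eq:1b} when $\mu_l=1$.

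For the forward direction I would start from $\boldsymbol{\lambda}\in\mathcal{S}_I(\mathbf{G})$, witnessed by integers $\lambda_{i,j}\in\mathbb{Z}_{\geq 0}$ obeying \eqref{eq:1a} and \eqref{eq:1b}, and first argue that each $\lambda_{i,j}\in\{0,1\}$: every edge of $G$ is incident to at least one real vertex (a size-$2$ recovery set joins two servers in $[n]$, and a size-$1$ recovery set becomes the edge $\{0,r\}$ with $r\in[n]$), and at any real vertex $l$ the non-negative integers on the incident edges sum to at most $\mu_l=1$ by \eqref{eq:1b}, forcing each of them to be $0$ or $1$. Hence $\tilde{\boldsymbol{x}}=(\lambda_{i,j})$ satisfies the second line of \eqref{match}; its first line holds at real vertices directly by \eqref{eq:1b}, and at each dummy vertex trivially, since a dummy vertex was introduced to remove a single self-loop and is therefore incident to exactly one edge, whose value has already been shown to be $\leq 1$. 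Thus $\tilde{\boldsymbol{x}}$ is a matching vector related to $\boldsymbol{\lambda}$ through \eqref{eq:1a}.

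For the converse, given a matching vector $\tilde{\boldsymbol{x}}=(\lambda_{i,j})$ satisfying \eqref{match}, the entries $\lambda_{i,j}\in\{0,1\}\subseteq\mathbb{Z}_{\geq 0}$, and for each real vertex $l\in[n]$ the matching inequality at $l$ is exactly \eqref{eq:1b} with $\mu_l=1$; setting $\lambda_i:=\sum_{j=1}^{t_i}\lambda_{i,j}$ as in \eqref{eq:1a} then exhibits $\boldsymbol{\lambda}=(\lambda_1,\dots,\lambda_k)\in\mathcal{S}_I(\mathbf{G})$. The only point needing care, and the one I expect to be the main (though minor) obstacle, is the asymmetry between the index sets of the two constraint systems: \eqref{eq:1b} ranges only over the real servers $l\in[n]$, whereas \eqref{match} imposes a constraint at every vertex, including the dummy ones. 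The resolution, as above, is that each dummy vertex has degree $1$ and its incident edge value is already pinned to $\{0,1\}$ by the constraint at the adjacent real vertex, so the extra matching constraints are vacuous. This is the integral counterpart of the remark observing that with $\boldsymbol{\mu}=\mathbf{1}_n$ the $\lambda_{i,j}$'s are binary and the recovery sets used for each file are disjoint.
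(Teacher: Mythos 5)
Your proposal is correct and takes essentially the same route as the paper, which simply refers back to the proof of Theorem~\ref{theorem1}: you rerun that argument with \eqref{FracMatch} replaced by \eqref{match}. The extra details you supply (that integrality together with $\boldsymbol{\mu}=\mathbf{1}_n$ forces $\lambda_{i,j}\in\{0,1\}$, as already noted in the paper's Remark, and that the degree-one dummy vertices make the additional matching constraints vacuous) are exactly the points the paper leaves implicit, so your write-up is a faithful, slightly more explicit version of the intended proof.
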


\begin{Corollary}\label{cor:integralmaxeq}
Consider an $(\mathbf{G},\boldsymbol{\mu})$ system with ${\boldsymbol{\mu}=\mathbf{1}_n}$. There exists a maximum demand vector $\boldsymbol{\lambda}^\star=(\lambda_1^\star,\dots,\lambda_k^\star)$ in the integral service rate region $S_I(\mathbf{G})$ of this storage system if and only if there exists a maximum matching vector ${\tilde{\boldsymbol{x}}^\star=(\lambda_{i,j}^\star:i\in [k]~\text{and}~j\in [t_i])}$ in the graph representation of $[n,k]_q$ code such that $\boldsymbol{\lambda}^\star$ and $\tilde{\boldsymbol{x}}^\star$ are related based on~\eqref{eq:1a}.
\end{Corollary}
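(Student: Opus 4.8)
The plan is to derive this statement from Corollary~\ref{cor:integraleq} in exactly the way Corollary~\ref{corollary1} is derived from Theorem~\ref{theorem1}. The single extra ingredient needed is that the two optimization problems have the same objective under the correspondence of Corollary~\ref{cor:integraleq}: by the equality constraints~\eqref{eq:1a}, any integral demand vector $\boldsymbol{\lambda}$ realized by integers $\{\lambda_{i,j}\}$ satisfies $\sum_{i=1}^{k}\lambda_i=\sum_{i=1}^{k}\sum_{j=1}^{t_i}\lambda_{i,j}$, which is precisely the objective $\sum_{e\in E}\tilde{x}_e$ of the maximum matching integer program~\eqref{maxmatch} once we identify the edge labelled $\lambda_{i,j}$ with the value $\tilde{x}_e=\lambda_{i,j}$. (Recall from the Remark that $\boldsymbol{\mu}=\mathbf{1}_n$ forces every integral $\lambda_{i,j}$ to lie in $\{0,1\}$, so this identification indeed yields a matching vector in the sense of~\eqref{match}.) Hence maximizing $\sum_i\lambda_i$ over $\mathcal{S}_I(\mathbf{G})$ is the same optimization as maximizing $\sum_e\tilde{x}_e$ over the matching vectors of the graph representation $G$.

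For the forward implication, let $\boldsymbol{\lambda}^\star$ be a maximum demand vector in $\mathcal{S}_I(\mathbf{G})$, with witnessing integers $\lambda_{i,j}^\star$ and sum $s=\sum_i\lambda_i^\star$. By Corollary~\ref{cor:integraleq}, $\tilde{\boldsymbol{x}}^\star=(\lambda_{i,j}^\star)$ is a matching vector of $G$ with $\sum_e\tilde{x}_e^\star=s$, related to $\boldsymbol{\lambda}^\star$ via~\eqref{eq:1a} by construction. If $\tilde{\boldsymbol{x}}^\star$ were not a maximum matching vector, there would be a matching vector $\tilde{\boldsymbol{x}}$ with $\sum_e\tilde{x}_e>s$; applying Corollary~\ref{cor:integraleq} in the other direction, the vector $\boldsymbol{\lambda}$ obtained from $\tilde{\boldsymbol{x}}$ through~\eqref{eq:1a} would lie in $\mathcal{S}_I(\mathbf{G})$ with $\sum_i\lambda_i>s$, contradicting maximality of $\boldsymbol{\lambda}^\star$. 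Hence $\tilde{\boldsymbol{x}}^\star$ is a maximum matching vector.

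The reverse implication is symmetric: given a maximum matching vector $\tilde{\boldsymbol{x}}^\star=(\lambda_{i,j}^\star)$ with $\sum_e\tilde{x}_e^\star=m(G)$, Corollary~\ref{cor:integraleq} produces a demand vector $\boldsymbol{\lambda}^\star$ in $\mathcal{S}_I(\mathbf{G})$ related to it via~\eqref{eq:1a} with $\sum_i\lambda_i^\star=m(G)$; any integral demand vector with strictly larger sum would, again by Corollary~\ref{cor:integraleq}, yield a matching vector violating maximality of $\tilde{\boldsymbol{x}}^\star$, so $\boldsymbol{\lambda}^\star$ is a maximum demand vector in $\mathcal{S}_I(\mathbf{G})$. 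I do not expect a genuine obstacle here; the only points requiring care are keeping the objective values aligned across the correspondence (which~\eqref{eq:1a} makes automatic) and noting that, unlike the fractional case, the integral maximum sum need not equal $\lambda^\star(\mathbf{G})$ — here it equals $m(G)$, which is at most $m_f(G)=\lambda^\star(\mathbf{G})$.
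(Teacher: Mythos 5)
Your proposal is correct and follows essentially the same route as the paper, which proves this corollary by repeating the objective-alignment argument of Corollary~\ref{corollary1} (via Corollary~\ref{cor:integraleq} in place of Theorem~\ref{theorem1}): the constraint~\eqref{eq:1a} identifies $\sum_i\lambda_i$ with $\sum_{i,j}\lambda_{i,j}=\sum_e\tilde{x}_e$, so maximality transfers in both directions. Your explicit appeal to the Remark (that $\boldsymbol{\mu}=\mathbf{1}_n$ forces integral $\lambda_{i,j}$ to be binary) and your note that the integral optimum equals $m(G)$ rather than $\lambda^\star(\mathbf{G})$ are welcome clarifications of details the paper leaves implicit, but they do not change the argument.
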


\section{Generalization of Batch codes}\label{generalization-batch}

In this section, we show how the service rate problem can be viewed as a generalization of the problem of batch codes. That further illustrates connections with PIR codes, switch codes and locally repairable codes which all can be seen as special cases of batch codes (see~\cite{skachek2018batch}). 

\subsection{Definitions of Batch Codes and PIR Codes}

\begin{definition}\cite{ishai2004batch}
An \ul{$(n, k, t, m, \tau)$ batch code} $\mathcal{C}$ over a finite alphabet ${\sum}$ encodes any string $\boldsymbol{x}=(x_1,\dots,x_k)$ into $m$ strings (buckets) $\boldsymbol{y}_1,\dots,\boldsymbol{y}_m$ of total length $n$ by an encoding mapping $\mathcal{C}: \sum^k \rightarrow \sum^n$, such that for each $t$-tuple (batch) of indices $i_1,\dots,i_t \in [k]$, the entries $x_{i_1},\dots,x_{i_t}$ can be decoded by reading at most $\tau$ symbols from each bucket. 
\end{definition}

\begin{definition}\cite{skachek2018batch}
An \ul{$(n, k, t)$ primitive batch code} is an $(n, k, t, m, \tau)$ batch code, where each bucket contains exactly one symbol, i.e., $n=m$. Note that in this setting $\tau=1$, i.e., at most one symbol can be recovered from each bucket.
\end{definition}

\begin{definition}
An \ul{$(n, k, t)$ multiset primitive batch code} is an $(n, k, t)$ primitive batch code where the information symbols ${x_{i_1},\dots, x_{i_t}}$ are requested by $t$ distinct users such that the indices ${i_1,\dots,i_t}$ are not necessarily distinct and in general they form a multiset. Moreover, the requested symbols can be reconstructed from the data read by $t$ different users independently (i.e., $x_{i_j}$ can be recovered by the user $j$) so that the sets of the symbols read by these users are disjoint.
\end{definition}

It should be noted that for the sake of simplicity, we refer to a linear $(n, k, t)$ multiset primitive batch code over $\mathbb{F}_q$ as an \ul{$[n,k,t]_q$ batch code}.

\begin{proposition}\cite[Theorem 1]{lipmaa2015linear} \label{batchcode}
A linear $[n,k]_q$ code $\mathcal{C}$ with generator matrix $\mathbf{G}$ is an $[n,k,t]_q$ batch code if and only if there exist $t$ non-intersecting sets $T_1,\dots,T_t$ of indices of columns in the generator matrix $\mathbf{G}$ such that for each $j \in [t]$, there exists a linear combination of columns of $\mathbf{G}$ indexed by $T_j$ which equals to the vector $\mathbf{e}_{i_j}$, for all $j \in [t]$ and $i_j \in [k]$. 
\end{proposition}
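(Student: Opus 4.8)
The plan is to unfold the combinatorial definition of the $(n,k,t)$ multiset primitive batch code and then reduce its recovery requirement to a linear-algebra condition on the columns of $\mathbf{G}$; the equivalence then falls out by matching quantifiers.

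Recall what it means for $\mathcal{C}$ to be an $[n,k,t]_q$ batch code: for every request tuple $(i_1,\dots,i_t)\in[k]^t$ there are $t$ users, the $j$th user reads the encoded symbols indexed by some set $T_j\subseteq[n]$, the sets $T_1,\dots,T_t$ are pairwise disjoint, and the $j$th user can reconstruct $x_{i_j}$ from the symbols it reads, for every message $\boldsymbol{x}\in\mathbb{F}_q^k$. Since the encoded symbol at coordinate $l$ is the linear form $\boldsymbol{x}\mapsto\boldsymbol{x}\,\mathbf{g}_l$, the data available to user $j$ is exactly the vector $\boldsymbol{x}\,\mathbf{G}_{T_j}$, where $\mathbf{G}_{T_j}$ is the submatrix of $\mathbf{G}$ formed by the columns indexed by $T_j$. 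Hence it suffices to establish the following per-user equivalence: for a fixed $i\in[k]$ and $T\subseteq[n]$, the coordinate $x_i$ can be reconstructed from $\boldsymbol{x}\,\mathbf{G}_{T}$ for all $\boldsymbol{x}$ \emph{if and only if} $\mathbf{e}_i=\sum_{l\in T}\alpha_l\,\mathbf{g}_l$ for some scalars $\alpha_l\in\mathbb{F}_q$, i.e.\ $\mathbf{e}_i\in\operatorname{span}\{\mathbf{g}_l:l\in T\}$. Granting this and applying it to each $j\in[t]$, with the disjointness of $T_1,\dots,T_t$ carried over verbatim in both directions, the two descriptions in the statement coincide, which is the claim.

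For the per-user equivalence, the ``if'' direction is immediate: if $\mathbf{e}_i=\sum_{l\in T}\alpha_l\mathbf{g}_l$ then $x_i=\boldsymbol{x}\,\mathbf{e}_i=\sum_{l\in T}\alpha_l(\boldsymbol{x}\,\mathbf{g}_l)$, so the user recovers $x_i$ by a fixed linear combination of the symbols it reads. For the ``only if'' direction, reconstructability of $x_i$ from $\boldsymbol{x}\,\mathbf{G}_{T}$ means $x_i$ is constant on the fibres of the linear map $\boldsymbol{x}\mapsto\boldsymbol{x}\,\mathbf{G}_{T}$; equivalently, its left kernel $\{\boldsymbol{z}\in\mathbb{F}_q^{k}:\boldsymbol{z}\,\mathbf{G}_{T}=\boldsymbol{0}\}$, which equals $\big(\operatorname{span}\{\mathbf{g}_l:l\in T\}\big)^{\perp}$ for the standard bilinear form on $\mathbb{F}_q^{k}$, is contained in the hyperplane $\mathbf{e}_i^{\perp}=\{\boldsymbol{z}:z_i=0\}$. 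Taking orthogonal complements and using $(W^{\perp})^{\perp}=W$ for subspaces over a field yields $\mathbf{e}_i\in\operatorname{span}\{\mathbf{g}_l:l\in T\}$.

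The step I expect to take the most care is exactly this last reduction: the batch-code definition a priori guarantees recovery only by \emph{some} (possibly nonlinear) decoding function of the symbols read, so one must argue that for a linear code a coordinate recoverable from a collection of linear measurements is in fact recoverable by a linear function of them --- this is what the kernel / orthogonal-complement argument above provides. The remaining work is bookkeeping: identifying the ``$t$ distinct users with pairwise disjoint read-sets'' in the batch-code definition with the ``$t$ non-intersecting column-index sets $T_j$'' in the statement, and noting that the quantification ``for all $j\in[t]$ and $i_j\in[k]$'' is the same as ranging over all request tuples $(i_1,\dots,i_t)\in[k]^t$.
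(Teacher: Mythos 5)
Your proof is correct. Note that the paper does not prove this proposition at all---it is imported verbatim as Theorem~1 of the cited work of Lipmaa and Skachek---so there is no in-paper argument to compare against; your write-up is essentially the standard proof of that cited result. In particular, you correctly identify and close the only nontrivial gap: the batch-code definition permits arbitrary (possibly nonlinear) per-user decoders, and your kernel/orthogonal-complement argument (left kernel of $\mathbf{G}_{T}$ contained in $\{\boldsymbol{z}: z_i=0\}$, then $(W^{\perp})^{\perp}=W$) shows that recoverability of $x_{i}$ from the linear measurements $\boldsymbol{x}\,\mathbf{G}_{T}$ forces $\mathbf{e}_{i}\in\operatorname{span}\{\mathbf{g}_{l}: l\in T\}$, which together with the disjointness bookkeeping gives both directions of the equivalence.
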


\begin{definition}\cite{fazeli2015pir}\label{PIR}
A linear $[n,k]_2$ code $\mathcal{C}$ with generator matrix $\mathbf{G}$ is called a t-server PIR code if for every $i \in [k]$, there exist $t$ disjoint sets of columns of $\mathbf{G}$ that add up to $\mathbf{e}_i$.
\end{definition}

\subsection{Connection with Batch Codes and PIR Codes}

\begin{theorem}\label{theorem4}
Given the integral service rate region $\mathcal{S}_I(G)$ of code $\mathbf{G} \in \mathbb{F}^{k\times n}_q$ with service rate ${\boldsymbol{\mu}=\mathbf{1}_n}$, if all vectors in the set $S_t=\{\boldsymbol{\lambda}=(\lambda_1,\dots,\lambda_k)|\sum_{i=1}^{k}\lambda_i=t, \lambda_i \in \mathbb{Z}_{\geq 0}\}$ are in the $\mathcal{S}_I(G)$, the code $\mathbf{G}$ is a linear $[n,k,t]_q$ batch code.
\end{theorem}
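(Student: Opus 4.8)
The plan is to translate the hypothesis about the integral service rate region into the combinatorial condition of Proposition~\ref{batchcode}. First I would fix an arbitrary batch request, i.e., an arbitrary $t$-tuple of indices $i_1,\dots,i_t \in [k]$ (possibly with repetitions), and let $\boldsymbol{\lambda} = (\lambda_1,\dots,\lambda_k)$ be the vector where $\lambda_i$ counts how many times $i$ appears among $i_1,\dots,i_t$. Then $\sum_{i=1}^k \lambda_i = t$ and $\lambda_i \in \mathbb{Z}_{\geq 0}$, so by hypothesis $\boldsymbol{\lambda} \in S_t \subseteq \mathcal{S}_I(\mathbf{G})$. By the definition of the integral service rate region, there exist non-negative integers $\lambda_{i,j}$ satisfying~\eqref{eq:1a} and~\eqref{eq:1b} with $\boldsymbol{\mu} = \mathbf{1}_n$.

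Next I would extract the recovery sets: for each $i$, constraint~\eqref{eq:1a} gives $\sum_{j=1}^{t_i}\lambda_{i,j} = \lambda_i$, so among the recovery sets $R_{i,1},\dots,R_{i,t_i}$ for file $f_i$ we can pick (counting multiplicity $\lambda_{i,j}$) exactly $\lambda_i$ of them. Doing this over all $i$ yields a collection of exactly $t$ recovery sets, one associated to each requested index $i_\ell$; call them $T_1,\dots,T_t$ (so $T_\ell$ is a recovery set for file $f_{i_\ell}$, hence some linear combination of the columns of $\mathbf{G}$ indexed by $T_\ell$ equals $\mathbf{e}_{i_\ell}$). The key point is that these sets are pairwise disjoint: if two of them shared a server $l \in [n]$, then server $l$ would lie in at least two of the chosen recovery sets, forcing the left-hand side of~\eqref{eq:1b} for that $l$ to be at least $2$, contradicting $\mu_l = 1$. (Here I use that the $\lambda_{i,j}$ are integers, so "fractional overlap" is impossible — an overlap contributes an integer $\geq 1$ twice.) Therefore $T_1,\dots,T_t$ are $t$ non-intersecting sets of column indices such that for each $\ell \in [t]$ a linear combination of columns indexed by $T_\ell$ equals $\mathbf{e}_{i_\ell}$. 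Since the batch $(i_1,\dots,i_t)$ was arbitrary, Proposition~\ref{batchcode} implies $\mathbf{G}$ is a linear $[n,k,t]_q$ batch code.

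The main obstacle — really the only subtle point — is the disjointness argument, i.e., correctly arguing that integrality of the $\lambda_{i,j}$ together with $\mu_l = 1$ forces the selected recovery sets to be pairwise non-overlapping. One must be careful that the multiplicities $\lambda_{i,j}$ are handled properly: if some $\lambda_{i,j} \geq 2$, then the recovery set $R_{i,j}$ would be "used twice," again violating~\eqref{eq:1b} at any $l \in R_{i,j}$, so in fact each used $\lambda_{i,j}$ is $0$ or $1$ and all the used recovery sets (across all files) are mutually disjoint; this is exactly the observation already noted in the Remark following the integral service rate region definition. A minor bookkeeping detail is the presence of dummy servers (label $0$) used to eliminate self-loops: a size-$1$ recovery set $\{0,r\}$ only genuinely occupies server $r$, and since different files get distinct dummy vertices $\emptyset_{f_i}$ in the graph representation, no conflict arises at dummy vertices either; alternatively one works directly with the generator-matrix formulation where size-$1$ recovery sets are literally singletons $\{r\}$ and the argument is unchanged. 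Once disjointness is established, the reduction to Proposition~\ref{batchcode} is immediate.
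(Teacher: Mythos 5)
Your proposal is correct and follows essentially the same route as the paper's proof: membership of every $\boldsymbol{\lambda}\in S_t$ in $\mathcal{S}_I(\mathbf{G})$ plus integrality and $\mu_l=1$ forces the used $\lambda_{i,j}$ to be binary and the selected recovery sets to be pairwise disjoint, after which Proposition~\ref{batchcode} gives the batch-code property. You merely spell out the disjointness and dummy-server bookkeeping more explicitly than the paper (which cites the earlier Remark), so there is nothing to correct.
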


Theorem~\ref{theorem4} shows that the integral setting of the service rate problem where the solution (the portion of requests that are assigned to the recovery sets) is restricted to be integral, is the same as the setting of the multiset primitive batch code problem. Thus, the general setting of the service rate problem where a fractional solution is allowed, can be viewed as a generalization of the setting of the multiset primitive batch code problem.

\begin{Corollary}\label{cor:genPIR}
Given the integral service rate region $\mathcal{S}_I(G)$ of code $\mathbf{G} \in \mathbb{F}^{k\times n}_q$ with service rate ${\boldsymbol{\mu}=\mathbf{1}_n}$, if all vectors in the set $S_t=\{t.e_1=(t,0,\dots,0), \dots, t.e_k=(0,\dots,0,t)| t \in \mathbb{N}\}$ are in the $\mathcal{S}_I(G)$, the code $\mathbf{G}$ is a t-server PIR code.
\end{Corollary}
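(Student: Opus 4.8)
The plan is to derive this directly from Theorem~\ref{theorem4} together with the characterizations of batch codes (Proposition~\ref{batchcode}) and $t$-server PIR codes (Definition~\ref{PIR}), by working one information coordinate at a time. First I would fix an index $i \in [k]$ and apply the hypothesis to the single vector $t \cdot \mathbf{e}_i = (0,\dots,0,t,0,\dots,0) \in \mathcal{S}_I(G)$. By the definition of the integral service rate region, membership of $t \cdot \mathbf{e}_i$ means there exist integers $\lambda_{i',j} \in \mathbb{Z}_{\geq 0}$ satisfying~\eqref{eq:1a} and~\eqref{eq:1b} with $\boldsymbol{\mu} = \mathbf{1}_n$; since the only nonzero coordinate is $\lambda_i = t$, all of the mass is concentrated on recovery sets for file $f_i$, i.e.\ $\sum_{j=1}^{t_i} \lambda_{i,j} = t$ and $\lambda_{i',j}=0$ for $i' \neq i$. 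By the Remark following the definition of the integral service rate region (the case $\boldsymbol{\mu} = \mathbf{1}_n$), the $\lambda_{i,j}$ are in fact binary and the recovery sets $R_{i,j}$ with $\lambda_{i,j}=1$ are pairwise disjoint. Hence there are exactly $t$ pairwise disjoint recovery sets for file $f_i$, i.e.\ $t$ disjoint subsets of columns of $\mathbf{G}$ each admitting a linear combination equal to $\mathbf{e}_i$.

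Next I would observe that this is precisely the defining property in Definition~\ref{PIR}: for every $i \in [k]$ there exist $t$ disjoint sets of columns of $\mathbf{G}$ that add up to $\mathbf{e}_i$. Running the argument above over all $i \in [k]$ — which is exactly what the hypothesis ``all vectors in $S_t=\{t\mathbf{e}_1,\dots,t\mathbf{e}_k\}$ lie in $\mathcal{S}_I(G)$'' gives us — establishes the PIR property for every coordinate simultaneously, so $\mathbf{G}$ is a $t$-server PIR code. (One small point of care: Definition~\ref{PIR} is stated over $\mathbb{F}_2$, where ``linear combination'' and ``sum'' coincide; over a general $\mathbb{F}_q$ the recovery sets carry nonzero coefficients $\alpha_j$, and the statement should be read with that understanding, or one restricts $q=2$ to match Definition~\ref{PIR} verbatim. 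Either way the disjointness, which is the substantive content, is what the integral service region delivers.)

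Alternatively, and perhaps more cleanly, one can simply invoke Theorem~\ref{theorem4} as a black box for each coordinate: for fixed $i$, the hypothesis that $t\mathbf{e}_i \in \mathcal{S}_I(G)$ is weaker than the hypothesis of Theorem~\ref{theorem4} (which asks for the whole simplex-slice $S_t$), so one cannot literally quote Theorem~\ref{theorem4}; instead one reuses its \emph{proof idea} in the degenerate situation where only one file is requested, at multiplicity $t$. In that degenerate case the batch-code decoding into $t$ disjoint buckets specializes to $t$ disjoint recovery sets all targeting the same $\mathbf{e}_i$, which is the PIR condition. I would present the self-contained coordinate-wise argument of the previous paragraph rather than a reduction, since it is short.

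The only real obstacle is bookkeeping rather than mathematical depth: one must be careful that the disjointness of recovery sets is genuinely forced, not merely allowed. This is handled by the $\boldsymbol{\mu}=\mathbf{1}_n$ integrality remark already recorded in the excerpt — with unit server rates and integer $\lambda_{i,j}$, constraint~\eqref{eq:1b} forces each server into at most one chosen recovery set, hence the chosen recovery sets for $f_i$ are disjoint — so no new lemma is needed. Everything else is a direct translation between the language of service rates and the language of PIR codes.
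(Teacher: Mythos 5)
Your proposal is correct and follows essentially the same route as the paper's own proof: use integrality together with $\boldsymbol{\mu}=\mathbf{1}_n$ (the Remark) to force binary $\lambda_{i,j}$ and pairwise disjoint recovery sets for each demand vector $t\cdot\mathbf{e}_i$, then read off Definition~\ref{PIR} coordinate by coordinate. Your added caveat about the $\mathbb{F}_2$ versus $\mathbb{F}_q$ phrasing of Definition~\ref{PIR} is a fair observation that the paper glosses over, but it does not change the argument.
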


Next, we present an example regarding the application of Theorems~\ref{theorem4} that shows a binary $[7,3]_2$ simplex code is a $[7,3,4]_2$ batch code. 

\begin{example}
Consider a binary $[7,3]_2$ simplex code. In this example, utilizing the graph representation and the integral service rate region of the code, we want to show that this code is a $[7,3,4]_2$ batch code. To this end, we need to show that each demand vector $\boldsymbol{\lambda}=(\lambda_1,\lambda_2,\lambda_3)$ with $\sum_{i=1}^{3}\lambda_i=4$, is in the integral service rate region of the $[7,3]_2$ simplex code, i.e., for each of these vectors, there exists a matching in the graph representation of $[7,3]_2$ binary simplex code. W.l.o.g we assume that $\lambda_1\geq \lambda_2 \geq \lambda_3$. The $4$ recovery sets of each file, say $f_1$, are known. As can be seen in Fig.~\ref{fig:lattice}, the four magenta edges corresponding to the recovery sets of file $f_1$, constructs a maximum matching. Using the Algorithm~\ref{algorithm}, we show that how one can start with a maximum matching corresponding to the vector $\boldsymbol{\lambda_a}=(4,0,0)$ and by following at most two steps find the maximum matching corresponding to any  vector ${\boldsymbol{\lambda_b}=(\lambda_1,\lambda_2,\lambda_3)}$ with ${\lambda_1+\lambda_2+\lambda_3=4}$. For this purpose, in a nutshell, we start with the recovery sets of file $f_1$ and replace some of them with the recovery sets for files $f_2$ and $f_3$ as needed. Next, we define three steps, based on which we present the Algorithm~\ref{algorithm} that can be generalized for any number of files $k$.\vspace{0.1cm}

\textbf{Step 1:} Consider the systematic recovery set for file $f_i$, and add the corresponding edge to the matching set. Accordingly, remove the recovery set for file $f_1$, incident to the node $f_i$, from the matching set.

\textbf{Step 2:} Find ${{(\lambda_i-1)}/{2}}$ number of loops, each of size $4$, consisting of $2$ recovery sets for file $f_i$ and $2$ recovery sets for file $f_1$, in the graph representation of the code. Then, by considering each of the loops, replace the $2$ recovery sets for file $f_1$ with the $2$ recovery sets for file $f_i$ in the matching. 

\textbf{Step 3:} This step would be the same as step $2$, except that ${(\lambda_i-1)/2}$ is replaced with $(\lambda_i)/2$.

\begin{algorithm}
 \textbf{Input: }Max matching corresponding to $(4,0,0)$ \\
 \textbf{for} $i=2:3$ \textbf{do}\\
  ~~\textbf{if} $\lambda_i$ is odd\\
  ~~~~~~\textbf{do} Step 1;\\
  ~~~~~~\textbf{do} Step 2;\\
  ~~\textbf{else}\\
  ~~~~~~\textbf{do} Step 3;\\
 \textbf{end}\\
\textbf{Output: }Max matching corresponding to $(\lambda_1,\lambda_2,\lambda_3)$\\
 \caption{Finding a Max Matching for any $(\lambda_1,\lambda_2,\lambda_3)$ with ${\lambda_1+\lambda_2+\lambda_3=4}$}  \label{algorithm}
\end{algorithm}

For instance, for finding the corresponding matching for the demand vector ${\boldsymbol{\lambda}=(2,2,0)}$, we need to show that there are two recovery sets of file $f_1$ that can be used to form two recovery sets for file $f_2$. The $4$ magenta edges (recovery sets of file $f_1$) form a maximum matching of size $4$. In the graph representation, it is easy to find a loop of size $4$ consisting of two magenta edges, $\lambda_{a,3}$ and $\lambda_{a,4}$, and two green edges (recovery sets of file $f_2$), $\lambda_{b,3}$ and $\lambda_{b,4}$. Therefore, simply we can replace these two magenta edges with the green edges and construct another maximum matching of size $4$ which is a matching corresponding to the demand ${\boldsymbol{\lambda}=(2,2,0)}$. 

Now, consider the demand vector ${\boldsymbol{\lambda}=(2,1,1)}$. Since $\lambda_2$ and $\lambda_3$ are odd, we need to find the recovery sets of file $f_1$ that can be used for building systematic recovery sets for files $f_2$ and $f_3$. It can be seen that the magenta edges connected to the nodes $2$ and $4$ of the graph representation, i.e., $\lambda_{a,2}$ and $\lambda_{a,3}$, can be removed from the original maximum matching and be substituted by the green edge $\lambda_{b,1}$, and the blue edge $\lambda_{c,1}$, which represent systematic recovery sets for $f_1$ and $f_2$, respectively.
\end{example}

\begin{figure}[hbt]
\begin{center}
\begin{tikzpicture}
\def\horzgap{0.9in}; 
\def \gapVN{1in}; 
\def \gapCN{1in}; 
\def\nodewidth{0.4in};
\def \edgewidth{0.05in};

\tikzstyle{righ} = [circle, draw, line width=0.05mm, inner sep=0mm, fill=white, minimum size=\nodewidth]
\tikzstyle{lef} = [circle, draw, line width=0.05mm, inner sep=0mm, fill=white, minimum size=\nodewidth]
\tikzstyle{end} = [circle, draw, line width=0.05mm, inner sep=0mm, fill=white, minimum size=1.3*\nodewidth]

\foreach \a in {1,...,3}{ \node[righ] (a\a) at (1.3*\a*\gapCN+0in,\horzgap) {};}

\path (a1)++(0,0in) node ()[scale=0.3] {\Huge{$\color{black}{\emptyset_{f_1}}$}};
\path (a1)++(-0.7cm,0in) node ()[scale=0.3] {\Huge{$\color{black}{0}$}};
\path (a2)++(0,0in) node ()[scale=0.3] {\Huge{$\color{black}{\emptyset_{f_2}}$}};
\path (a2)++(-0.7cm,0in) node ()[scale=0.3] {\Huge{$\color{black}{0}$}};
\path (a3)++(0,0in) node ()[scale=0.3] {\Huge{$\color{black}{\emptyset_{f_3}}$}};
\path (a3)++(-0.7cm,0in) node ()[scale=0.3] {\Huge{$\color{black}{0}$}};

\foreach \b in {1,...,3}{
 \node[lef] (b\b) at (1.3*\b*\gapVN+0in,0) {};
}    
\path (b1) ++(0,0) node()[scale=0.25, inner sep=0mm] {\Huge{$\color{black}{f_1}$}};
\path (b1) ++(-0.7cm,0) node()[scale=0.25, inner sep=0mm] {\Huge{$\color{black}{1}$}};
\path (b2) ++(0,0) node()[scale=0.25, inner sep=0mm] {\Huge{$\color{black}{f_2}$}};
\path (b2) ++(-0.7cm,0) node()[scale=0.25, inner sep=0mm] {\Huge{$\color{black}{2}$}};
\path (b3) ++(0,0) node()[scale=0.25, inner sep=0mm] {\Huge{$\color{black}{f_3}$}};
\path (b3) ++(-0.7cm,0) node()[scale=0.25, inner sep=0mm] {\Huge{$\color{black}{4}$}};

\foreach \c in {1,...,3}{
 \node[lef] (c\c) at (0.5*\c*+2.6in,-1.2*\horzgap) {};
} 
\path (c1)++(0,0in) node ()[scale=0.3] {\Huge{$\color{black}{f_1+f_2}$}};
\path (c1)++(-0.7cm,0in) node ()[scale=0.3] {\Huge{$\color{black}{3}$}};
\path (c2)++(0,0in) node ()[scale=0.3] {\Huge{$\color{black}{f_1+f_3}$}};
\path (c2)++(-0.7cm,0in) node ()[scale=0.3] {\Huge{$\color{black}{5}$}};
\path (c3)++(0,0in) node ()[scale=0.3] {\Huge{$\color{black}{f_2+f_3}$}};
\path (c3)++(-0.7cm,0in) node ()[scale=0.3] {\Huge{$\color{black}{6}$}};

\foreach \d in {1}{
 \node[end] (d\d) at (1*\d+2.6in,-2.5*\horzgap) {};
}   
\path (d1) ++(0,0) node()[scale=0.25, inner sep=0mm] {\Huge{$\color{black}{f_1+f_2+f_3}$}};
\path (d1) ++(-0.8cm,0) node()[scale=0.25, inner sep=0mm] {\Huge{$\color{black}{7}$}};

\draw[line width=0.25mm, magenta] (a1.south)--(b1.north) node [midway,left] {$\lambda_{a,1}$};
\draw[line width=0.25mm, magenta] (b2.south)--(c1.north)node [pos=0.6,below] {$\lambda_{a,2}$};
\draw[line width=0.25mm, magenta] (b3.south)--(c2.north)node [pos=0.6,below] {$\lambda_{a,3}$};
\draw[line width=0.25mm, magenta] (c3.south)--(d1.north)node [midway,above] {$\lambda_{a,4}$};

\draw[line width=0.25mm, green] (a2.south)--(b2.north) node [midway,left] {$\lambda_{b,1}$};
\draw[line width=0.25mm, green] (b1.south)--(c1.north)node [midway,left] {$\lambda_{b,2}$};
\draw[line width=0.25mm, green] (b3.south)--(c3.north)node [midway,left] {$\lambda_{b,3}$};
\draw[line width=0.25mm, green] (c2.south)--(d1.north)node [midway,left] {$\lambda_{b,4}$};

\draw[line width=0.25mm, blue] (a3.south)--(b3.north) node [midway,left] {$\lambda_{c,1}$};
\draw[line width=0.25mm, blue] (b1.south)--(c2.north)node [pos=0.4,above] {$\lambda_{c,2}$};
\draw[line width=0.25mm, blue] (b2.south)--(c3.north)node [pos=0.4,above] {$\lambda_{c,3}$};
\draw[line width=0.25mm, blue] (c1.south)--(d1.north)node [midway,above] {$\lambda_{c,4}$};

\def\moveX {1.8*\nodewidth};
\def\moveXA {2*\nodewidth};

\end{tikzpicture} \vspace{-0.25cm}
\end{center}
\caption{Graph representation of $[7,3]_2$ simplex code.}\label{fig:lattice}
\end{figure}
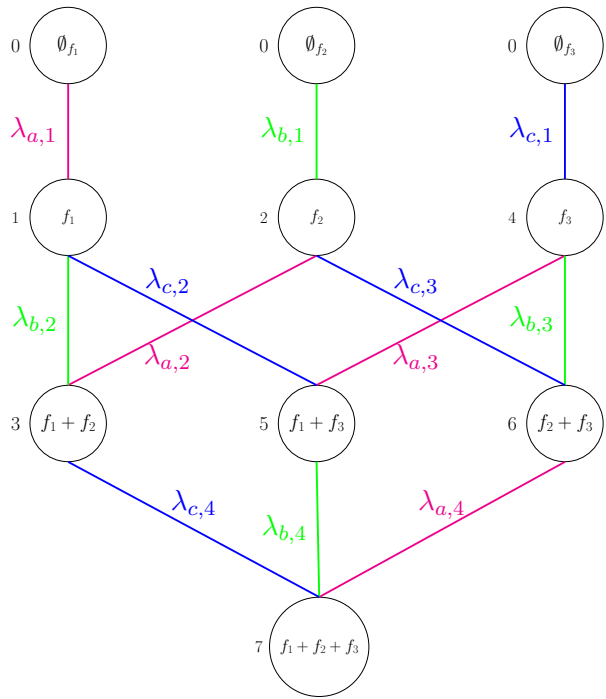

\section*{Acknowledgment}
Part of this research is based upon work supported by the National Science Foundation under Grant No.~CIF-1717314.

\bibliographystyle{IEEEtran}
\bibliography{ISIT-2020-Long-Version}

\appendix[Proofs of Theorems and Corollaries]

\begin{proof}[Proof of Theorem~\ref{theorem1}]
If a vector ${\boldsymbol{\lambda}=(\lambda_1,\dots,\lambda_k)}$ is in the service rate region of this storage system, there exist $\lambda_{i,j}$'s, for ${i\in [n]}$ and ${j\in [t_i]}$, satisfying the set of constraints in~\eqref{eq:1a}, \eqref{eq:1b} and \eqref{eq:pos}. Based on the definition of the graph representation of codes and the fact that ${\mu_l=1}$, ${l\in [n]}$, it is easy to observe that the set of constraints in~\eqref{eq:1b} and~\eqref{eq:pos} are equivalent to the set of constraints in~\eqref{FracMatch}. Thus, the vector ${\boldsymbol{x}=(\lambda_{i,j}:i\in [k]~\text{and}~j\in [t_i])}$ is a fractional matching in the graph representation of $[n,k]_q$ code. Now, assume that a vector ${\boldsymbol{x}=(\lambda_{i,j}:i\in [k]~\text{and}~j\in [t_i])}$ is a fractional matching in the graph representation of $[n,k]_q$ code. Hence, the vector $\boldsymbol{x}$ satisfies the sets of constraints in~\eqref{FracMatch}, or equivalently, it satisfies the set of constraints in~\eqref{eq:1b} and~\eqref{eq:pos}. Based on Definition~\ref{def:SerRateReg}, a vector ${\boldsymbol{\lambda}=(\lambda_1,\dots,\lambda_k)}$ obtained from $\boldsymbol{x}$ using~\eqref{eq:1a} is in the service rate region of ${[n,k]_q}$ code. 
\end{proof}

\begin{proof}[Proof of Corollary~\ref{corollary1}]
An instance of the maximum fractional matching vector in the graph representation of an $[n,k]_q$ code can be obtained by solving the following LP according to~\eqref{MaxFracMatch}.
\begin{align*}
&\max~~~~{\sum_{i=1}^{k}\sum_{j=1}^{t_i}\lambda_{i,j}}\\
&~\text{s.t.}~~~~~~~\eqref{eq:1b},~\eqref{eq:pos}\nonumber
\end{align*}

According to the Theorem~\ref{theorem1}, there exist a demand vector ${\boldsymbol{\lambda}=(\lambda_1,\dots,\lambda_k)}$ in the service rate region which is obtained from $\boldsymbol{x}^\star=(\lambda_{i,j}^\star:i\in [k]~\text{and}~j\in [t_i])$ using~\eqref{eq:1a}. We want to show that the vector $\boldsymbol{\lambda}$ is in fact a  maximum demand vector ${\boldsymbol{\lambda}}^\star$. Using~\eqref{eq:1a}, we have $\sum_{i=1}^{k}\sum_{j=1}^{t_i}\lambda_{i,j}^\star=\sum_{i=1}^{k}\lambda_i$. Thus, it can be easily verified that $\boldsymbol{x}^\star$ provides a solution for the following LP:
\begin{align*}
&\max~~~~{\sum_{i=1}^{k}\lambda_i}\\
&~\text{s.t.}~~~~~~\eqref{eq:1a},~\eqref{eq:1b},~\eqref{eq:pos}\nonumber
\end{align*}

Based on~\eqref{maxrate}, an instance of the maximum demand vector is obtained by solving the above linear programming. Thus, the vector ${\boldsymbol{\lambda}}$ resulted from $\boldsymbol{x}^\star$ by~\eqref{eq:1a} is a maximum demand vector ${\boldsymbol{\lambda}}^\star$. On the other hand, for a maximum demand vector $\boldsymbol{{\lambda}}^\star$ in the service rate region which is obtained from~\eqref{maxrate}, there exists a fractional matching vector $\boldsymbol{x}$ that based on a similar reasoning, provides a solution for~\eqref{MaxFracMatch}. Thus, the vector $\boldsymbol{x}$ is a maximum fractional matching vector $\boldsymbol{x}^\star$ in the graph representation of $[n,k]_q$ code. 
\end{proof}

\begin{proof}[Proof of Theorem~\ref{theorem2}] 
According to Corollary~\ref{corollary1}, a maximum demand vector ${\boldsymbol{{\lambda}}^\star=(\lambda_{1}^\star,\dots,\lambda_{k}^\star)}$ is obtained from a maximum fractional matching vector ${\boldsymbol{x}^\star=(\lambda_{i,j}^\star:i\in [k]~\text{and}~j\in [t_i])}$ using~\eqref{eq:1a}. It follows that
\[\sum_{i=1}^{k}\lambda_{i}^\star=\sum\lambda_{i,j}^\star\] 
where $\lambda_{i,j}^\star$'s are the elements of ${\boldsymbol{x}^\star}$. Thus, ${\lambda^\star(\mathbf{G})=m_f(G)}$. Thus, according to Proposition~\ref{prop:duality}, we have ${{m(G)\leq \lambda^\star(\mathbf{G})\leq v(G)}}$.
\end{proof}

\begin{proof}[Proof of Theorem~\ref{theorem3}] 
Based on the definition of the bipartite graph, a graph ${G(V,E)}$ is a bipartite graph if the vertices $V$ of the graph, can be divided into two disjoint and independent sets, say $V_1$ and $V_2$ such that every edge of the graph ${e \in E}$ connects a vertex in $V_1$ to one in $V_2$. Thus, in order to show that the graph representation of the $k$-dimensional binary simplex code with generator matrix $\mathbf{G}$ is a bipartite graph, we need to determine the two disjoint sets of vertices, i.e., $V_1$ and $V_2$, in the graph representation $G(V,E)$ of the $[2^k-1,k]_2$ simplex code. Then, we have to prove that every edge ${e \in E}$ of the graph representation connects a vertex in ${V_1}$ to one in ${V_2}$ or equivalently we have to prove that there is no edge between the vertices in ${V_1}$ as well as in ${V_2}$. 

The set of vertices $V$ of the graph representation ${G(V,E)}$ of the ${[2^k-1,k]_2}$ simplex code correspond to the files stored on the storage nodes or the columns of the generator matrix $\mathbf{G}$. The columns of the generator matrix $\mathbf{G}$ of the $[2^k-1,k]_2$ simplex code are the set of all non-zero vectors of $\mathbb{F}_2^k$. Note that up to column permutations the generator matrix $\mathbf{G}$ of the ${[2^k-1,k]_2}$ simplex code is unique. Now, we can partition the columns of $\mathbf{G}$ into two sets $V_1$ and $V_2$ such that $V_1$ is the set of all non-zero column vectors in ${\mathbb{F}_2^k}$ with odd number of ones and $V_2$ is the set of all non-zero column vectors in ${\mathbb{F}_2^k}$ with even number of ones. Thus, $V_1$ and $V_2$ are two disjoint sets of columns that partition the columns of $\mathbf{G}$. Moreover, the self-loops corresponding to the systematic recovery sets are removed from the graph representation by adding dummy nodes. Consider each dummy node (column) as a zero vector in ${\mathbb{F}_2^k}$, denoted by $\mathbf{0}$. Thus, $V_1$ and $V^\prime_2=\{V_2 \cup \mathbf{0}\}$ determine two disjoint sets of vertices partitioning $V$ in the ${G(V,E)}$.

We want to prove that there is no edge between the vertices corresponding to the set of columns $V_1$ and there is no edge between the vertices corresponding to the set $V^\prime_2$. The proof is based on the contradiction approach. Let ${\mathbf{x},\mathbf{x}' \in V_1}$. Assume that there is an edge between the vertices corresponding to the ${\mathbf{x},\mathbf{x}' \in V_1}$. This means that $\{\mathbf{x},\mathbf{x}'\}$ forms a recovery set for a file $f_i$, $i \in [k]$, i.e., ${\mathbf{x}+\mathbf{x}'=\mathbf{e}_i}$. Since both $\mathbf{x}$ and $\mathbf{x}'$ have an odd number of ones, their sum must have an even number of ones which is a contradiction. Thus, there is no edge between the vertices in $V_1$. The proof for $V^\prime_2$ is similar. Let ${\mathbf{x},\mathbf{x}' \in V^\prime_2}$. Since both $\mathbf{x}$ and $\mathbf{x}'$ have an even number of ones, their sum must have an even number of ones which shows that $\{\mathbf{x},\mathbf{x}'\}$ cannot be a recovery set for any file $f_i$, $i \in [k]$. Thus, there is no edge between any ${\mathbf{x},\mathbf{x}' \in V^\prime_2}$. 
\end{proof}

\begin{proof}[Proof of Corollary~\ref{cor:simplex}]
The proof consists of two parts. First, we need to prove the converse by showing that the service capacity cannot be bigger than $2^{k-1}$. It can be easily seen that the set of all $2^{k-1}$ vertices corresponding to the columns of $\mathbf{G}$ with odd number of ones forms a minimum vertex cover in the graph representation of the ${[2^k-1,k,2^{k-1}]_2}$ simplex code. The reason is that since the graph representation of the this code, based on Theorem~\ref{theorem3}, is a bipartite graph, all the edges are covered by either one of the two partitions, i.e., $V_1$ and $V^\prime_2$ introduced in the proof of Theorem~\ref{theorem3}. Thus, the vertex cover number of this graph is $v(G)=2^{k-1}$ which indicates that $\lambda^\star(\mathbf{G})\leq 2^{k-1}$. 
\\

Next, we show the achievability proof by showing that there exists a vector ${{\boldsymbol{\lambda}}=(\lambda_{1},\dots,\lambda_{k})}$ in service rate region of the $[2^k-1,k,2^{k-1}]_2$ simplex code with $\sum_{i=1}^k\lambda_i=2^{k-1}$. For this purpose, since the graph representation of this code is a bipartite graph, we have $m(G)=v(G)=2^{k-1}$ which means that there exists a matching of size ${2^{k-1}}$ in the graph representation of the binary $k$-dimensional simplex code. For the $[2^k-1,k,2^{k-1}]_2$ simplex code which is a special subclass of availability codes, it is known that every file $f_i$ for ${i \in [k]}$ can be recovered from ${2^{k-1}-1}$ (availiability) disjoint groups of two (locality) servers. Thus, by considering the systematic recovery set, for each file $f_i$, $i \in [k]$, there are exactly $2^{k-1}$ disjoint recovery sets. One can consider the set of edges ${\{\lambda_{i,1},\dots,\lambda_{i,2^{k-1}}\}}$, for every ${i \in [k]}$, as an instance of matching in the graph representation. Corresponding to this matching, a demand vector ${{\boldsymbol{\lambda}}=2^{k-1}.\mathbf{e}_i}$ for ${i \in [k]}$ is obtained by applying~\eqref{eq:1a}. 
\end{proof}

\begin{proof}[Proof of Corollary~\ref{cor:integraleq}]
The proof is similar to the proof of Theorem~\ref{theorem1}.
\end{proof}

\begin{proof}[Proof of Corollary~\ref{cor:integralmaxeq}]
The proof is similar to the proof of Corollary~\ref{corollary1}.
\end{proof}

\begin{proof}[Proof of Theorem~\ref{theorem4}]
The existence of all vectors in set  $S_t=\{\boldsymbol{\lambda}=(\lambda_1,\dots,\lambda_k)|\sum_{i}^{k}\lambda_i=t, \lambda_i \in \mathbb{Z}_{\geq 0}\}$ in the integral service rate region $\mathcal{S}_I(\mathbf{G})$ of code $\mathbf{G}$ indicates that for any multiset of indices $\{i_1,\dots,i_t\}$, $i_j \in [k]$, the requests for the information symbols $f_{i_1},f_{i_2},\dots, f_{i_t}$ can be served at the same time by the storage system. On the other hand, each server can serve up to one request at a time, i.e., $\mu_l=1$ for all servers $l \in [n]$, which shows that $\lambda_{i,j}$ are binary. As a result, $t$ disjoint recovery sets are used for satisfying each demand vector $\boldsymbol{\lambda} \in S_t$. This means that for every multiset of indices $\{i_1,\dots,i_t\}$, $i_j \in [k]$, there exist $t$ disjoint sets $T_1,\dots,T_t$ of indices of columns in the generator matrix $\mathbf{G}$ such that for each $j \in [t]$, there exists a linear combination of columns of $\mathbf{G}$ indexed by $T_j$ which equals to the vector $\mathbf{e}_{i_j}$. Therefore, based on Proposition~\ref{batchcode}, the code $\mathbf{G}$ is a $[n,k,t]_q$ batch code.
\end{proof}

\begin{proof}[Proof of Corollary~\ref{cor:genPIR}]
The existence of the set  $S_t=\{t.\mathbf{e}_1=(t,0,\dots,0), \dots, t.\mathbf{e}_k=(0,\dots,0,t)| t \in \mathbb{N}\}$ in the integral service rate region $\mathcal{S}_I(\mathbf{G})$ of code $\mathbf{G}$ indicates that for every $i \in [k]$, $t$ requests for file $f_{i}$ can be served at the same time by the storage system. Since $\mu_l=1$ for all servers $l \in [n]$ and $\lambda_{i,j}$ are binary, one can readily confirm that for each file $f_i$, $i \in [k]$, there exist $t$ disjoint recovery sets which are used for satisfying the demand vector $t.\mathbf{e}_i \in S_t$. Thus, for every $i \in [k]$, there exist $t$ disjoint sets of columns in the generator matrix $\mathbf{G}$ that add up to $\mathbf{e}_{i}$. Therefore, based on definition~\ref{PIR}, the code $\mathbf{G}$ is an $[n,k]_2$ t-server PIR code.
\end{proof}
\end{document}